\newtheorem{theorem}{Theorem}[section]
\newtheorem{lemma}[theorem]{Lemma}
 \newtheorem{proposition}[theorem]{Proposition}
 \numberwithin{equation}{section}
\newtheorem{assumption}{Assumption}
 \newcommand{\eps}{\varepsilon}
\begin{document}

\title{\LARGE   Pricing and Routing Mechanisms for Differentiated Services in   \\ an Electric Vehicle  Public Charging Station Network }

\author{Ahmadreza Moradipari, Mahnoosh Alizadeh
\thanks{This work was supported by NSF grant \#1847096. The authors are with the Department of Electrical and Computer Engineering at the University of California, Santa Barbara. \texttt{ahmadreza\_moradipari@ucsb.edu}, \texttt{alizadeh@ucsb.edu}}}

%%% ----------------------------------------------------------------------
\maketitle
%%% ----------------------------------------------------------------------
\begin{abstract}
We consider a Charging Network Operator (CNO) that owns a network of Electric Vehicle (EV) public charging stations and  wishes to offer a menu of differentiated service options for access to its stations. This involves designing optimal pricing and routing schemes for the setting where users cannot directly choose which station they use. Instead, they choose their priority level and energy request amount from the differentiated service menu, and then the CNO directly assigns them to a station on their path. This allows higher priority users to experience lower wait times at stations, and allows the CNO to directly manage demand, exerting a higher level of control that can be used to manage the effect of EV on the grid and control station wait times.
We consider the scenarios where the CNO is a social welfare-maximizing or a profit-maximizing entity, and in both cases, design pricing-routing policies that ensure users reveal their true parameters  to the CNO.
\end{abstract}

\section*{Notation} 
 For each customer of type ($i,j,\ell$):
\begin{flalign}
    &\mathcal{V} &&\text{Set of value of times indexed by } v_i \nonumber \\
    &\mathcal{E} &&\text{Set of charging demands indexed by } e_j \nonumber \\
    &\mathcal{B} &&\text{Set of all traveling preferences indexed by } \ell \nonumber \\
    &\mathcal{G}_\ell &&\text{Set of traveling preferences  }\nonumber\\
    % & && \text{customers of type }(i,j,\ell) \nonumber \\
    &\Lambda_{i,j,\ell} &&\text{Potential expected arrival rate} \nonumber \\
    &R_{i}^\ell && \text{Reward for receiving full battery charge } \nonumber \\
    &q && \text{Charging station } q= 1,\dots,Q \nonumber \\
    &r_{i,j,\ell}^{q} && \text{Routing probabilities to each charging station $q$} \nonumber \\
    &P_{i,j,\ell} && \text{Price of service option } (i,j,\ell) \nonumber \\
    & W_{i,j,\ell}  &&\text{Expected wait time of service option }(i,j,\ell) \nonumber \\
    & d_q  &&\text{Travel time from the main corridor} \nonumber\\ & && \text{to reach station } q \nonumber \\
      & \varrho_q  &&\text{Average queuing time at station } q \nonumber \\
    &\lambda_{i,j,\ell}  &&\text{Effective expected arrival rate }\ell \nonumber \\
    &\boldsymbol{\theta} &&\text{Vector of locational marginal electricity prices   } \nonumber \\
    & C_q  &&\text{Capacity of charging station }q \nonumber \\
    & D  &&\text{Power transfer distribution factor} \nonumber \\
    & E_q  &&\text{Charging station $q$ to load bus mapping matrix} \nonumber \\
    & f_t  &&\text{Line loading limits at each time} \nonumber \\
    &\mathbf{h}_{i,j,\ell}  &&\text{Temporary allocation of admitted customers} \nonumber \\
     & x  &&\text{Lagrange multiplier of capacity constraint} \nonumber 
\end{flalign}
\section{Introduction}\label{sec.intro}
It is well-known that without appropriate demand management schemes in place, Electric
Vehicle (EV) charging patterns could create problems for power transmission and distribution networks, and reduce the environmental benefits of transportation electrification. Hence, the past decade has seen significant research advances in the design of EV demand management algorithms. Broadly speaking, most available smart charging approaches  focus on optimizing  residential and commercial charging profiles when the duration of charge events allows for temporal load shifting.  However, our focus in this paper is on public charging station networks, which are fundamentally different from residential and commercial charging in two ways: 1) Temporal load shifting after a plug-in event is not feasible, unless battery swapping methods are employed. Most  drivers would want to leave the station as soon as possible, quite similar to a gas station stop; 2) Access to EV supply equipment (EVSE) is open to the public, which creates congestion effects and results in wait times at popular stations.

{\it Prior art:} We   categorize the rich literature on mobility-aware charge management of EVs in three categories.  
The first category considers using the mobility pattern of EVs in order to optimize EV charging load in an economic dispatch problem and manage EVs' effects on transmission systems (see, e.g., \cite{khodayar,6729116,sioshansi,8262816,8302928,7967870}) or distribution systems (see, e.g., \cite{6663724,6672275}).  In \cite{7410103}, the authors study the dynamic impact of EV movements on integrated power and traffic systems. They propose Nodal Time-of-Use (NTOU) and Road Traffic Congestion (RTC) prices to control the driving pattern of EV loads.  In \cite{7763884}, the authors study the extended Pickup Delivery Problems (PDPs) for an EV fleet containing EV customers with different service requests. They propose a mixed-integer quadratic constraints optimization for solving the offline pre-trip scheduling problem.  This line of work is not focused on public charging stations and mostly adopts traffic assignment models.
The second category of related work focuses on the problem of routing EV users to stations (see, e.g., \cite{Goeke201581,Wang201537,6805663,pourazarm2016optimal,7001717,5978239}).  Naturally, given the stochastic nature of EV arrivals and limited number of EVSEs at each station, one can consider the problem of managing access to public charging stations as a queuing network, where previous works have considered various objectives such as  revenue maximization or waiting time minimization (see, e.g.   \cite{7042792,qin2011charging, gusrialdi2014scheduling} and the  references therein). The main focus of these papers is the design of optimal {\it routing} policies to directly send users to stations given heterogeneous user needs and not on designing  pricing strategies.   The advantage of using our proposed mechanism compared to these papers is that we jointly design incentive compatible pricing and routing policies. This means that our work does not assume that customers will have to follow our routing orders without considering customers incentives to deviate from the posted assignment. The downside is that our algorithm is more complex than one that is solely focused on optimal routing without any incentive issues. 
The third category of work, which is most intimately connected to this paper, considers the design of pricing strategies to manage users' access to charging networks, where individuals decide which station to use based on prices (self routing) \cite{moradipari2018electric}. In \cite{7835662}, the authors study waiting times of charging station queues and the profit of the CNO under flat rate charging prices as well as a threshold-based pricing policy that penalizes higher demand. In \cite{bayram2013decentralized}, the authors propose a  Stackelberg framework to design prices for charging stations that incentives more uniform station utilization.   In \cite{Liu2019ElectricVE}, the authors study the joint charging and navigation problem of EVs. They formulate en-route charging navigation problem using Dynamic Programming (DP). They propose a so-called Simplified Charge Control (SCC) algorithm for deterministic traffic networks. Moreover, for the stochastic case, they propose an online state recursion algorithm.

Our objective is to guide  EV drivers to drive  \emph{into the right station} in a mobility-aware fashion, in order to 1) manage the effect of EVs on the  grid (e.g., on capacity constrained feeders or integration of behind-the-meter solar) and 2) ensure fair service to customers with proper capacity allocation and short station wait times (admission control), considering heterogeneous user preferences and needs. This is not an easy task to achieve merely through pricing algorithms, mainly due to the complexity of  the price response structure of users and its dependence on the users' mobility needs and preferences, information which is not readily  available and is very hard to obtain.  Hence, we take a different path here, which allows us to somewhat separate the pricing and admission control aspects of the problem. We assume that customers cannot directly choose which charging station they will charge at. Instead, a Charging Network Operator (CNO) is in charge of directly assigning users to charging stations given their respective {\it value of time} (VoT), {\it charging demand} and {\it travel preferences}.  We believe that this is reasonable given that, even today, access to public charging stations is only allowed for specific vehicle types or with users with prepaid charging plans/subscriptions.  A customer's travel preferences specify which charging stations they are willing to visit. The CNO's goal is to design a menu of {\it differentiated service options} with service qualities that are tailored to the characteristics of heterogeneous users. Each service option is tailored to users with given VoT, charging demand, and travel preferences, and is associated with a routing policy (i.e., the probability of that customer type being assigned to each of the stations on their path), as well as an appropriate price. The CNO wishes to optimize these differentiated routing policies and  prices  in order to optimally use capacity-limited charging stations and minimize electricity costs. Furthermore, the CNO's goal is to design {\it incentive-compatible} pricing-routing policies, which ensures that individual users reveal their true needs and preferences to the CNO. Such differentiated pricing mechanisms have been studied before in the context of residential demand response in recent years (see, e.g., \cite{bitar,6876120})   in order to incentivize the participation of loads in direct load control programs, analogous to what we are trying to achieve here for fast charging networks.
In the context of electric transportation systems, in \cite{7914766}, the authors propose differentiated incentive compatible pricing schemes to manage a single charging station in order to increase smart charging opportunities by incentivizing users to have later deadlines for their charging needs (i.e., offer more laxity). Another line of work which inspires the models we adopt in this paper is that of service differentiation techniques in queuing networks, see, e.g., \cite{katta05,Dovrolis,yahalom2006designing,bradford1996pricing}. 

 The contributions of this paper are as follows: 1) Modeling the decision problem faced by a CNO for managing EVs in a public charging station network through differentiated services (Section  \ref{sys.model}); 2) Proposing incentive compatible pricing and routing policies for maximizing the social welfare (Section  \ref{section.so}) or the profit of the CNO (Section \ref{section.pm}) considering users’ mobility patterns, distribution network constraints or behind-the-meter solar generation; 3) Proposing an algorithm that finds the globally optimal solution for the CNO’s non-convex objective in the special case of hard capacity constraints in both social welfare and profit maximization scenarios; 4) Numerical study of the benefits of differentiated services for operation of fast charging networks (Section \ref{section.sim}). A preliminary version of this work is presented in \cite{8619218}. In this work, we add heterogeneous traveling preferences for EV users, we extend our results for a  profit maximizing CNO, and we account for the usage of time-varying behind-the-meter solar energy.

\section{System Model}\label{sys.model}
\subsection{Individual User Model}
We first describe the individual EV users' parameters and decision making model.

\subsubsection{User types}
We assume that users  belong to one of $V\times E \times B$ types. A type $(i,j,\ell)$ customer has a value of time (VoT)  $v_i$ with $i \in \mathcal V = \{1,\ldots, V\}$, an energy demand $e_j$  with $j\in \mathcal E = \{1,\ldots,E\}$, and a traveling preference $\mathcal{G}_{\ell}$, with $\ell \in \mathcal B = \{1,\ldots,B\} $.  The value of time is often used to model the heterogeneity of users’ utility and choice when optimizing their response in the presence of travel time variations. The set of traveling preferences $\mathcal{G}_{\ell}$ declares the set of stations to which  customers with preference $\ell$ have access on their path. More specifically, for each traveling preferences $\ell$, we define the vector ${\bf y}_{\ell}$ with length $Q$ (number of charging stations) such that ${\bf y}_{\ell}( q) = 1$ if station $  q \in \mathcal{G}_{\ell}$ and $0$ otherwise.
For convenience, we order the customer types such that both VoT and energy demand are in ascending order, i.e., $v_1 < v_2 < \ldots < v_V$, and $e_1 < e_2 < \ldots < e_E$. In this paper,
we assume that users do not act strategically in choosing the amount of energy they need, i.e., they fully charge their EV if they enter a charging station. 

%\begin{figure}
%\centering 
%\includegraphics[width=0.9\linewidth]{types.pdf} 
%  \caption{ Three-dimensional visualization of user types.} \vspace{-.2cm} \label{fig:types}
%\end{figure}

  We assume that type $(i,j,\ell)$ customers arrive in the system with a given 3-dimensional expected (average) arrival rate  matrix $\mathbf{\Lambda} = [\Lambda_{i,j,\ell}]_{i \in \mathcal V, j \in\mathcal E, \ell \in \mathcal B}$, which we consider as an inelastic and  known parameter.
   In each {\it potential  arrival}, the customers can choose to either purchase a service option from the differentiated service options offered by CNO, or choose to not buy any charging services.   Note that we are in a static setting, i.e., the expected rate of arrival of users of different types is assumed as a constant variable when designing pricing/routing policies. While the arrival rate can vary across time, we will assume that the dynamics of the charging process at fast charging stations is faster than the dynamics of average traffic conditions.  
   
\subsubsection{Service options}

We assume that the number of differentiated service options that are available matches the three-dimensional user types $(i,j,\ell) \in \mathcal V \times \mathcal E \times \mathcal B$. The CNO will sell each service option $(i,j,\ell)$ with price $P_{i,j,\ell}$. Moreover, service options are differentiated in terms of a routing policy $\mathbf r_{i,j,\ell} = [r^q_{i,j,\ell}]_{q = 1,\ldots,Q}$, which is a column vector of routing probabilities of customers that purchase service option $(i,j,\ell)$ to each charging station $q \in \mathcal G_\ell$.

The joint choice of these {\it pricing-routing policies} $(P_{i,j,\ell}, \mathbf r_{i,j,\ell})$ would affect  the proportion of users that choose to purchase  each service option, which would in turn affect the arrival rate and average charging demand per EV at each charging station. As a result, the average total electricity demand and waiting times at the station are determined through the design of these pricing-routing policies. Hence, the design of the pricing-routing policy to be employed directly affects the social welfare (or the CNO's profit). To concretely model this connection, we first model how users choose which service type to purchase (if any). 

\subsubsection{User decision model}
In general, users have no obligation to buy the services option corresponding to their own true type (why would I tell a CNO that I have low value of time and be assigned a longer wait?). The total utility of a user from purchasing charging services is the reward they receive from charging minus the expected waiting cost (which is the product of VoT with the expected waiting time) and the price paid for charging services.
Let us assume that customers with value of time $v_i$ and traveling preference  $\ell$ will get a reward $R_i^{\ell}$ for receiving full battery charge.  Furthermore, we assume that information about the expected wait time $W_{i,j,\ell}$ of each  service option $(i,j,\ell)$ in the menu is available to users.  Throughout this paper, we assume that the time it takes to drive to a station from the main corridor (denoted by $d_q$) is included in the ``wait time'' corresponding to that station (on top of the queuing time $\varrho_{q}$), i.e., we have 
\begin{align}
    W_{i,j,\ell} = \sum_{q=1}^{Q} \bigg( d_q + \varrho_{q} \bigg) r_{i,j,\ell}^{(q)}.
\end{align}
We will assume that the users do not observe the current exact realization of wait times, i.e., the expected wait time $W_{i,j,\ell}$ is not conditioned on the realization of the random arrival rate of user and will be constant at the equilibrium.
  Therefore, customers of type $(i,j,\ell)$ will choose their service option $(m,k,t)$ by solving:
\begin{equation}\label{cust.benef}
\max_{m \in \mathcal V, j \leq k \leq E, t \in \mathcal  B_{\ell}} R_i^{\ell} - v_i W_{m,k,t} - P_{m,k,t}.
\end{equation}
According to our assumption on the inelasticity of user's charging needs,  customer of type $(i,j,\ell)$ can only choose a service option $(m,k,\ell)$ if $e_j \leq e_k$. Moreover, we assume users of type $(i,j,\ell)$ may only choose a travel preference $t \in \mathcal  B_{\ell}$, where $\mathcal B_{\ell}$ is defined as the set of all preferences $t\in \mathcal B$ such that $\mathcal G_t \subset \mathcal G_{\ell}$ (otherwise the user would have to change their travel origin-destination pair).
If the total utility defined in \eqref{cust.benef} is not positive for any available service option $(m,k,t)$, then that customer will not purchase charging services.   We would like to note that our scheme is not forcing any user to accept the CNO’s routing to different stations. It only provides lower prices for more flexibility in regard to waiting time and station choice. If a user is not willing to provide this flexibility, they may choose to select the service option that only includes the specific station they would like to visit and naturally pay a higher price for receiving service.

The aggregate effect of each individual customer's decision of whether to buy service or not and their choice of service option will lead to a Nash Equilibrium (NE) of {\it effective expected arrival rates} in the charging station network, denoted by $\boldsymbol{\lambda} = [\lambda_{i,j,\ell}]_{i \in \mathcal V, j \in\mathcal E, \ell \in \mathcal B}$.  It is shown in \cite{Schmeidler1973, MASCOLELL1984201} that the Nash equilibrium always exists in the non-atomic game where each user's set of strategies is continuous and measurable.
Our goal in this paper is to design a pricing routing policy such that 1) the resulting NE is optimal for maximizing social welfare or CNO profit; 2) we belong to the family of incentive-compatible (IC) pricing policies, i.e., policies where every user can achieve the best outcome for themselves by acting according to their true preferences. 

Next, we characterize conditions that should hold at equilibrium for such policies.
% {\blue
% \subsection{Existence of the Nash equilibrium}
%  We

% }

\subsection{Incentive Compatible (IC) Pricing-Routing Policies}
In this paper, we would like to focus on Incentive Compatible (IC) pricing-routing policies.
A pricing-routing policy is IC if, for each user type $(i,j,\ell)$, it is always optimal to choose the service option that matches their user type, i.e., service option $(i,j,\ell)$. Hence, no users will have any incentive to lie about their user type to the CNO, which can be desirable for system design purposes. 
Mathematically, given the user's decision problem in \eqref{cust.benef}, this condition will be satisfied for a pricing routing policy if  the following conditions are satisfied at equilibrium:
\begin{align}
&\forall k,t \in \mathcal V, t \neq k, \forall j \in \mathcal E, \forall \ell \in \mathcal B \nonumber\\
&~~~~~~~~ P_{k,j,\ell} + v_{k} W_{k,j,\ell} \leq  P_{t,j,\ell} + v_{k} W_{t,j,\ell}, \label{IC1}\\
&~~~~~~~~  P_{t,j,\ell} + v_{t} W_{t,j,\ell} \leq  P_{k,j,\ell} + v_{t} W_{k,j,\ell}, \label{IC2}\\
& \forall i \in \mathcal V, \forall t, k \in \mathcal E, t > k, \forall \ell \in \mathcal B \nonumber\\
&~~~~~~~~  P_{i,k,\ell} + v_{i} W_{i,k,\ell} \leq  P_{i,t,\ell} + v_{i} W_{i,t,\ell},  \label{IC3}\\
&\forall i \in \mathcal V, \forall j \in \mathcal E, \forall  \ell \in \mathcal B, \forall t \in \mathcal B_{\ell} \nonumber\\
&~~~~~~~~  P_{i,j,\ell} + v_{i} W_{i,j,\ell} \leq  P_{i,j,t} + v_{i} W_{i,j,t},  \label{IC4}
\end{align}
% where $\mathcal{G}_{t}$ is a subset of $\mathcal{G}_{k}$.
These conditions ensure that no user receives a higher utility by joining the system under any type other than their own.
For convenience, we refer to \eqref{IC1}-\eqref{IC2} as {\it vertical IC} constraints,  and   \eqref{IC3} as the {\it horizontal IC} constraint.
Note that while the service options' prices $P_{i,j,\ell}$ play a direct role in these conditions, the routing probabilities ${\bf r}_{i,j,\ell}$ only indirectly affect these conditions by determining the wait times $W_{i,j,\ell}$. We will explore this connection more later. 

Furthermore, Individual Rationality (IR) is satisfied if the following constraints are satisfied at equilibrium:
\begin{align}
P_{i,j,\ell} &= R_i^{\ell} - v_i W_{i,j,\ell}, ~\mbox{if~} 0<\lambda_{i,j,\ell}<\Lambda_{i,j,\ell}\nonumber\\
P_{i,j,\ell} &<  R_i^{\ell} - v_i W_{i,j,\ell}, ~\mbox{if~}  \lambda_{i,j,\ell} = \Lambda_{i,j,\ell} \nonumber\\
P_{i,j,\ell} &>  R_i^{\ell} - v_i W_{i,j,\ell}, ~\mbox{if~} \lambda_{i,j,\ell} =0.\label{ir.const}
\end{align}
That is, for if any user of type $(i,j,\ell)$ joins the system, their utility from joining the system must be non-negative.
Next, we study the structure of NE under any IC policies under two assumptions about rewards $R_i^{\ell}$.
\begin{assumption} For customers with different traveling preference, the  rewards $R_i^{\ell}$ satisfy the following:
\begin{align}
& \forall i \in \mathcal V, \forall \ell,m \in \mathcal B: \nonumber\\&
\text{if} ~ |\mathcal{G}_{\ell}| > |\mathcal{G}_{m}| ~ \text{then} ~ R_i^{\ell} < R_i^{m},  \nonumber\\&
\text{if} ~ |\mathcal{G}_{\ell}| = |\mathcal{G}_{m}| ~ \text{then} ~ R_i^{\ell} = R_i^{m}.
\end{align}
This means that users with a more limited set of charging options get a higher reward from receiving service.
\end{assumption}
\begin{assumption} For customers with the same traveling preference ${\ell}$, the ratios $\frac{R_i^{\ell}}{v_i}$ satisfy the following:
\begin{equation}
\frac{R_1^{\ell}}{v_1} < \frac{R_2^{\ell}}{v_2} < \ldots  < \frac{R_V^{\ell}}{v_V}.
\end{equation}
\end{assumption}
A similar structure was assumed in \cite{katta05} and other past work for service differentiation through pricing-routing policies  in a single  server service facility with Poisson arrivals and exponential service time  M/M/1.

The next lemma shows that under an IC pricing-routing policy, waiting time is a non-increasing function of VoT for users with the same traveling preference and energy demand.
\begin{lemma} \label{icwaitinglemma}
Under an incentive-compatible pricing-routing policy, for any users of types $(i+1,j,\ell)$ and $(i,j,\ell)$ who have purchased charging services, we must have: \begin{equation} W_{i+1,j,\ell} \leq W_{i,j,\ell}.\end{equation}
\end{lemma}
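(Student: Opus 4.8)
The plan is to derive the inequality directly from the two vertical IC constraints \eqref{IC1} and \eqref{IC2}, specialized to the adjacent VoT pair $k=i+1$ and $t=i$ at fixed energy demand $j$ and travel preference $\ell$. The guiding idea is that incentive compatibility between these two adjacent types simultaneously prevents the higher-VoT type from mimicking the lower-VoT option and the lower-VoT type from mimicking the higher-VoT option; combining both ``no-deviation'' inequalities will cancel the price terms and leave an inequality purely in the wait times and the VoT values.

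Concretely, first I would instantiate \eqref{IC1} with $k=i+1,\ t=i$ to obtain
\[
P_{i+1,j,\ell} + v_{i+1} W_{i+1,j,\ell} \leq P_{i,j,\ell} + v_{i+1} W_{i,j,\ell},
\]
and \eqref{IC2} with the same assignment to obtain
\[
P_{i,j,\ell} + v_{i} W_{i,j,\ell} \leq P_{i+1,j,\ell} + v_{i} W_{i+1,j,\ell}.
\]
Adding the two inequalities, the price terms $P_{i,j,\ell}$ and $P_{i+1,j,\ell}$ cancel from both sides, leaving
\[
v_{i+1} W_{i+1,j,\ell} + v_i W_{i,j,\ell} \leq v_{i+1} W_{i,j,\ell} + v_i W_{i+1,j,\ell}.
\]
Collecting the wait-time terms for each type then gives $(v_{i+1}-v_i)\,W_{i+1,j,\ell} \leq (v_{i+1}-v_i)\,W_{i,j,\ell}$.

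To finish, I would invoke the ordering convention $v_1 < v_2 < \cdots < v_V$, so that $v_{i+1}-v_i>0$; dividing through by this strictly positive quantity preserves the direction of the inequality and yields exactly $W_{i+1,j,\ell}\leq W_{i,j,\ell}$. I do not expect a substantive obstacle here, since this is the standard ``add the two local IC constraints'' trick. The only points requiring care are that both constraints \eqref{IC1}--\eqref{IC2} actually apply to this pair, which is ensured by the hypothesis that both types $(i+1,j,\ell)$ and $(i,j,\ell)$ have purchased charging services so their menu options are active at equilibrium, and that the VoT gap is \emph{strict}, which is precisely what permits the final division to keep the $\leq$ rather than collapsing the bound to an equality.
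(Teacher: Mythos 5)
Your proposal is correct and is essentially identical to the paper's own argument: the paper likewise adds the two vertical IC constraints \eqref{IC1}--\eqref{IC2} for the adjacent pair to obtain $(v_{i+1}-v_i)W_{i+1,j,\ell}\leq(v_{i+1}-v_i)W_{i,j,\ell}$ and then divides by the strictly positive gap $v_{i+1}-v_i$. No differences worth noting.
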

\begin{proof}
From vertical IC constraints \eqref{IC1} and \eqref{IC2}  for customers of type $(i,j,\ell)$ and $(i+1,j,\ell)$, we can write:
$$(v_{i+1} - v_i)W_{i+1,j,\ell} \leq (v_{i+1} - v_i)W_{i,j,\ell},$$
and the fact that  $v_{i+1} - v_i>0$, would lead to the result.
\end{proof}

The next lemma shows that it suffices to only check IC conditions for neighboring service options, e.g., the   options with one level higher value in VoT or energy.
\begin{lemma} \label{localic lemma}
(Local IC) The IC constraints \eqref{IC1}-\eqref{IC4} are satisfied if and only if:
\begin{align}
&\forall i \in \{1,\ldots, V-1\},\forall j \in \mathcal E, \forall \ell \in \mathcal B: \nonumber\\
&~~~~~~~~~ P_{i+1,j,\ell} + v_{i+1} W_{i+1,j,\ell} \leq  P_{i,j,\ell} + v_{i+1} W_{i,j,\ell}, \nonumber\\
&~~~~~~~~~ P_{i,j,\ell} + v_{i} W_{i,j,\ell} \leq  P_{i+1,j,\ell} + v_{i} W_{i+1,j,\ell}, \nonumber\\
&\forall i \in \mathcal V, \forall  j \in \{1,\ldots, E-1\}, \forall \ell \in \mathcal B: \nonumber\\
&~~~~~~~~~ P_{i,j,\ell} + v_{i} W_{i,j,\ell} \leq  P_{i,j+1,\ell} + v_{i} W_{i,j+1,\ell}, \nonumber\\
&\forall i \in \mathcal V, \forall  j \in \mathcal E, \forall \ell \in \mathcal B, \forall t\in \mathcal T_{\ell}: \nonumber\\
&~~~~~~~~~ P_{i,j,k} + v_{i} W_{i,j,k} \leq  P_{i,j,t} + v_{i} W_{i,j,t},   
\label{localIC3}
\end{align}
where $\mathcal T_{\ell}$ denotes the set of all travel preferences $t \in \mathcal B_{\ell}$ such that $|\mathcal G_t| = |\mathcal G_{\ell}| -1 $.
\end{lemma}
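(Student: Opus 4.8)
The plan is to prove the nontrivial ``if'' direction; the ``only if'' direction is immediate, since every local constraint is a special case of one of \eqref{IC1}--\eqref{IC4} obtained by forcing the two compared indices to be neighbors. I would treat the three coordinates of the type $(i,j,\ell)$ separately, and in each coordinate propagate the neighboring constraints to an arbitrary pair by a telescoping argument.

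For the VoT coordinate I would fix $j,\ell$ and abbreviate $P_i := P_{i,j,\ell}$, $W_i := W_{i,j,\ell}$. First I would invoke Lemma \ref{icwaitinglemma} to record that the local constraints already force $W_{i+1}\le W_i$. To recover \eqref{IC1}--\eqref{IC2} for a general pair I would split into the \emph{upward} case (a true type $k$ considering a higher option $t>k$) and the \emph{downward} case ($t<k$). In the upward case I sum the local ``no upward deviation'' inequalities $P_{i+1}-P_i \ge v_i(W_i-W_{i+1})$ over $i=k,\dots,t-1$, bound each $v_i\ge v_k$, and use $W_i-W_{i+1}\ge 0$ to reach $P_t-P_k \ge v_k(W_k-W_t)$, which is exactly the claim. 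The downward case is symmetric, summing the local ``no downward deviation'' inequalities $P_{i+1}-P_i \le v_{i+1}(W_i-W_{i+1})$ over $i=t,\dots,k-1$ and bounding $v_{i+1}\le v_k$. The step where the wait-time monotonicity is indispensable is precisely this replacement of $v_i$ by $v_k$, which preserves the inequality direction only because $W_i-W_{i+1}\ge 0$.

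For the energy coordinate the argument is cleaner because inelasticity admits only one direction (users may overstate but not understate demand). Fixing $i,\ell$ and setting $\phi_j := P_{i,j,\ell}+v_i W_{i,j,\ell}$, the local energy constraint reads $\phi_j \le \phi_{j+1}$, so $\phi$ is nondecreasing and $\phi_k\le\phi_t$ for every $t>k$, which is \eqref{IC3}. No monotonicity lemma enters here, since the same $v_i$ appears on both sides.

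The travel-preference coordinate is where I expect the real work. Fixing $i,j$ and setting $\psi_\ell := P_{i,j,\ell}+v_i W_{i,j,\ell}$, the local constraint \eqref{localIC3} gives $\psi_\ell \le \psi_t$ whenever $\mathcal G_t$ is obtained from $\mathcal G_\ell$ by deleting a single station. To reach \eqref{IC4} for an arbitrary $t\in\mathcal B_\ell$ (i.e.\ $\mathcal G_t\subsetneq\mathcal G_\ell$) I would induct on $d=|\mathcal G_\ell|-|\mathcal G_t|$: picking any station $q\in\mathcal G_\ell\setminus\mathcal G_t$, the intermediate set $\mathcal G_\ell\setminus\{q\}$ still contains $\mathcal G_t$ and has one fewer station, so if it corresponds to a valid preference $s\in\mathcal T_\ell$ then $\psi_\ell \le \psi_s \le \psi_t$ by the local step and the induction hypothesis. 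The main obstacle, and the place where the structure of the menu matters, is guaranteeing that this intermediate $s$ is itself an available preference, i.e.\ that the family of travel preferences is closed under single-station deletion (equivalently, that any subset of a reachable station set is reportable, which is the modeling rationale behind the definition of $\mathcal B_\ell$). Under this closure the chain $\mathcal G_\ell\supset\mathcal G_{s_1}\supset\cdots\supset\mathcal G_t$ through the subset lattice exists and the telescoped inequality $\psi_\ell\le\psi_t$ follows. I would flag this closure requirement explicitly, since without it $\mathcal T_\ell$ could fail to connect $\ell$ to some $t\in\mathcal B_\ell$ and the reduction to local constraints would break down.
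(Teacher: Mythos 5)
Your proposal is correct and follows essentially the same route the paper intends: the paper dismisses the proof as ``trivial by combining consecutive constraints,'' and your telescoping of the local inequalities along each coordinate (using $v_i \ge v_k$ together with the wait-time monotonicity $W_{i+1,j,\ell}\le W_{i,j,\ell}$ from Lemma~\ref{icwaitinglemma} in the VoT direction) is exactly that chaining argument carried out carefully. Your observation that the travel-preference step additionally requires the family $\{\mathcal G_t\}$ to be closed under single-station deletion (so that a chain from $\mathcal G_\ell$ down to any $\mathcal G_t\subset\mathcal G_\ell$ exists inside $\mathcal B_\ell$) is a genuine implicit assumption that the paper does not state, and flagging it strengthens rather than weakens the argument.
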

\begin{proof}
The proof is trivial by combining consecutive constraints and is omitted for brevity.
\end{proof}
In the following lemma, we highlight a special structure of users' arrival pattern $\boldsymbol{\lambda}$ at equilibrium under an IC policy. 

\begin{figure}
\centering
\includegraphics[width=0.7\linewidth]{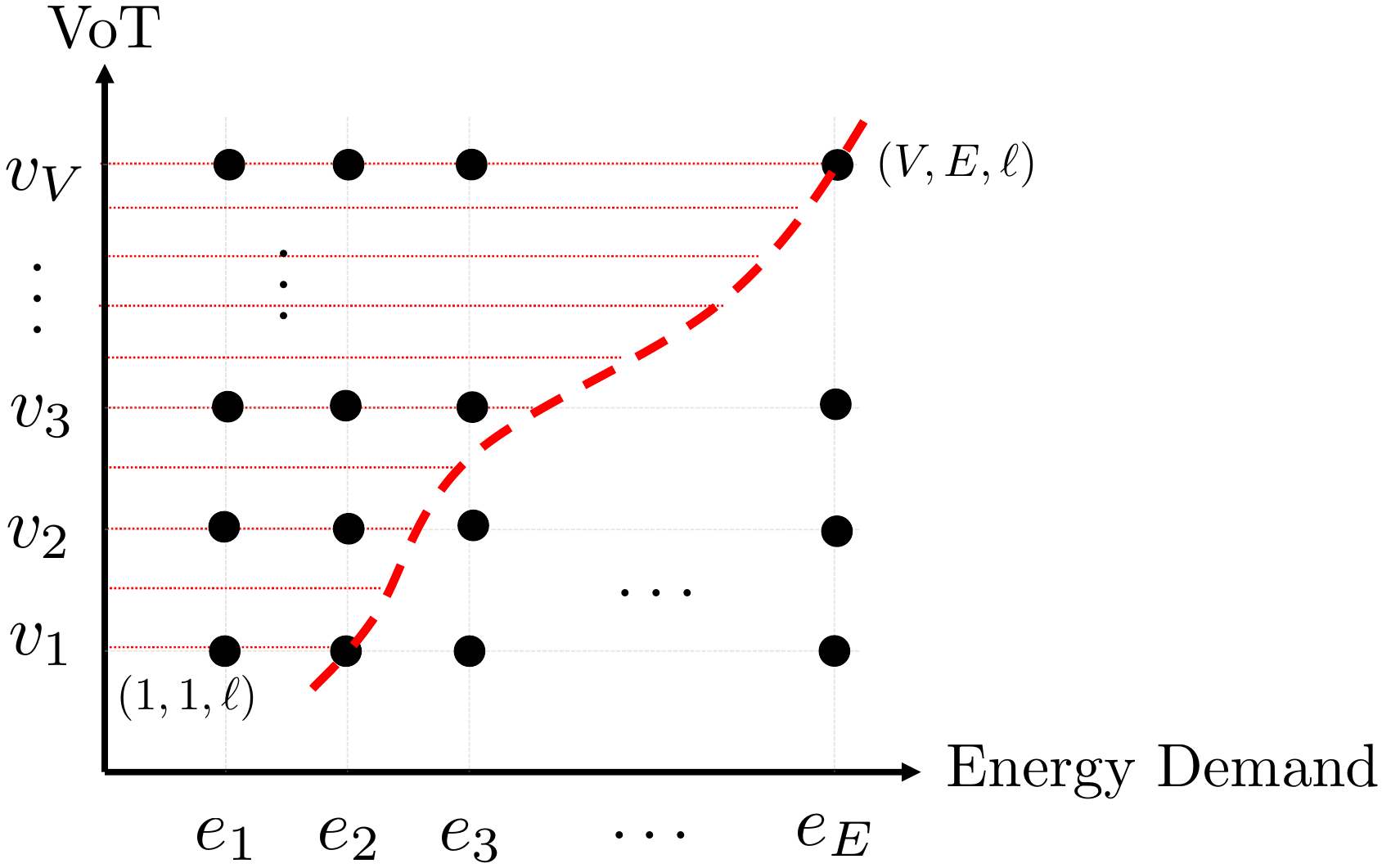}
  \caption{The solution structure for an IC policy.} \vspace{-.2cm} \label{fig:struct}
\end{figure}

\begin{lemma} \label{borderlemma}
If customers of type $(i,j,\ell)$ have  partially entered the system (i.e., $0<\lambda_{i,j,\ell}<\Lambda_{i,j,\ell}$), under an IC policy, the effective arrival rates satisfy:
\begin{enumerate}
\item (Vertical solution structure) $\lambda_{k,j,\ell} = \Lambda_{k,j,\ell}, \forall k>i$, and $\lambda_{k,j,\ell} = 0, \forall k<i$, i.e.,  customers with higher VoTs and similar energy demand and similar traveling preference enter the system in full,  and customers with lower VoTs do not enter the system.
\item (Horizontal solution structure) $\lambda_{i,k,\ell} = \Lambda_{i,k,\ell}, \forall k<j$, and $\lambda_{i,k,\ell} = 0, \forall k>j$, i.e., customers with lower energy demand and same VoT and same traveling preference  enter  in full, and customers with higher energy demand and same VoT and same traveling preference do not enter at all.
\end{enumerate}
\end{lemma}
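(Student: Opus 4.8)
The plan is to convert the hypothesis ``$(i,j,\ell)$ partially enters'' into the indifference relation supplied by the IR conditions \eqref{ir.const}, namely $P_{i,j,\ell}=R_i^{\ell}-v_iW_{i,j,\ell}$, or equivalently that the net utility $U_{i,j,\ell}:=R_i^{\ell}-v_iW_{i,j,\ell}-P_{i,j,\ell}$ equals zero. The whole argument then reduces to comparing net utilities of neighboring types and reading off the answer from \eqref{ir.const}: a strictly positive net utility forces full entry ($\lambda=\Lambda$), a strictly negative one forces no entry ($\lambda=0$). Throughout I take prices to be non-negative, as is natural for a charging price; this gives the bound $W_{a,b,\ell}\le R_a^{\ell}/v_a$ for any partially- or fully-entered type $(a,b,\ell)$, which is the only extra ingredient I need. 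The key algebraic input is extracted from the assumption that $R_a^{\ell}/v_a<R_b^{\ell}/v_b$ for $a<b$ at fixed $\ell$: cross-multiplying gives $v_aR_b^{\ell}>v_bR_a^{\ell}$, from which one checks the single-crossing inequality $\frac{R_b^{\ell}-R_a^{\ell}}{v_b-v_a}>\frac{R_b^{\ell}}{v_b}>\frac{R_a^{\ell}}{v_a}$.

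For the vertical structure with $k>i$, I would estimate the utility that type $k$ obtains from the \emph{wrong} option $(i,j,\ell)$. The vertical IC constraint \eqref{IC1} (type $k$ does not prefer option $(i,j,\ell)$) gives $U_{k,j,\ell}\ge R_k^{\ell}-v_kW_{i,j,\ell}-P_{i,j,\ell}$; substituting the indifference $P_{i,j,\ell}=R_i^{\ell}-v_iW_{i,j,\ell}$ and then $W_{i,j,\ell}\le R_i^{\ell}/v_i$ collapses the right-hand side to $v_k\big(R_k^{\ell}/v_k-R_i^{\ell}/v_i\big)>0$. Hence $U_{k,j,\ell}>0$ and \eqref{ir.const} forces $\lambda_{k,j,\ell}=\Lambda_{k,j,\ell}$. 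For $k<i$ I would argue by contradiction: suppose $\lambda_{k,j,\ell}>0$, so $U_{k,j,\ell}\ge 0$ and thus $P_{k,j,\ell}\le R_k^{\ell}-v_kW_{k,j,\ell}$. The vertical IC constraint \eqref{IC2} that type $i$ does not prefer option $(k,j,\ell)$, combined with $U_{i,j,\ell}=0$, yields $P_{k,j,\ell}\ge R_i^{\ell}-v_iW_{k,j,\ell}$; sandwiching $P_{k,j,\ell}$ between these two bounds gives $W_{k,j,\ell}\ge\frac{R_i^{\ell}-R_k^{\ell}}{v_i-v_k}>R_k^{\ell}/v_k$, which contradicts $W_{k,j,\ell}\le R_k^{\ell}/v_k$. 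Hence $\lambda_{k,j,\ell}=0$. (Lemma~\ref{icwaitinglemma} is consistent with this but is not needed; I work directly with the global IC inequalities, though by Lemma~\ref{localic lemma} it would suffice to chain the neighboring ones.)

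For the horizontal structure I would fix $i$ and $\ell$ and vary the energy index, exploiting the crucial fact that the reward $R_i^{\ell}$ carries no energy index and is therefore constant across $j$. The one-directional horizontal IC \eqref{IC3} says precisely that the effective price $\phi_{i,j}:=P_{i,j,\ell}+v_iW_{i,j,\ell}$ is nondecreasing in $j$, so $U_{i,j,\ell}=R_i^{\ell}-\phi_{i,j}$ is nonincreasing in $j$. Given $U_{i,j,\ell}=0$, every $k<j$ has $U_{i,k,\ell}\ge 0$ and every $k>j$ has $U_{i,k,\ell}\le 0$, and \eqref{ir.const} then gives full entry below the boundary and no entry above it.

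The main obstacle is the strictness in the horizontal case. Because rewards are equal across energy levels, there is no strict single-crossing as in the vertical case, so the IC monotonicity only delivers weak inequalities $U_{i,k,\ell}\ge 0$ (resp.\ $\le 0$); to conclude $\lambda_{i,k,\ell}=\Lambda_{i,k,\ell}$ (resp.\ $0$) rather than a tie of indifferent types I must know that $\phi_{i,j}$ is strictly increasing in $j$. I would handle this by invoking strictness as the relevant case---higher energy demand strictly raises the full price---or equivalently by treating the partially-entered type as the unique boundary. The only other point to flag is the reliance on $P_{i,j,\ell}\ge 0$ to obtain $W\le R/v$; this is innocuous for a charging tariff, but it is the place where an implicit modeling assumption enters the proof.
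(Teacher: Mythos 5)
The paper omits its own proof of this lemma (it only remarks that the result ``follows from combining IR and IC conditions, as well as Assumption 1''), so there is no official argument to compare against; what you have written is essentially the reconstruction the authors had in mind. Your vertical argument is sound: converting partial entry into the indifference $P_{i,j,\ell}=R_i^{\ell}-v_iW_{i,j,\ell}$, bounding $U_{k,j,\ell}$ from below via \eqref{IC1} for $k>i$, and sandwiching $P_{k,j,\ell}$ via \eqref{IC2} for $k<i$, with Assumption~2 supplying the strict single-crossing inequality $\tfrac{R_i^{\ell}-R_k^{\ell}}{v_i-v_k}>\tfrac{R_i^{\ell}}{v_i}>\tfrac{R_k^{\ell}}{v_k}$, is exactly the right mechanism, and you are right to flag that the bound $W\le R/v$ rests on the unstated but economically natural assumption $P_{i,j,\ell}\ge 0$.

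The genuine gap is the one you name yourself: the horizontal claim. The one-directional constraint \eqref{IC3} only makes $\phi_{i,j}=P_{i,j,\ell}+v_iW_{i,j,\ell}$ weakly nondecreasing in $j$, so from $U_{i,j,\ell}=0$ you get $U_{i,k,\ell}\ge 0$ for $k<j$ and $U_{i,k,\ell}\le 0$ for $k>j$, which by \eqref{ir.const} only excludes $\lambda_{i,k,\ell}=0$ (resp.\ $\lambda_{i,k,\ell}=\Lambda_{i,k,\ell}$) rather than forcing full (resp.\ zero) entry. This is not a defect you can repair from the hypotheses as given: for a general IC policy with $W_{i,j,\ell}=W_{i,j+1,\ell}$ and $P_{i,j,\ell}=P_{i,j+1,\ell}$ (e.g., identical routing and identical prices for adjacent energy levels), both types are indifferent and both may partially enter, so the horizontal statement is literally false without an additional strictness or tie-breaking condition. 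Neither Assumption~1 nor Assumption~2 addresses the energy direction, since the reward $R_i^{\ell}$ does not depend on $j$. Your proof would be complete if you added the hypothesis that $\phi_{i,j}$ is strictly increasing in $j$ (which does hold, for instance, for the marginal prices \eqref{marginal} whenever $(\boldsymbol{\theta}+\mathbf{x})^T\mathbf{r}_{i,j,\ell}>0$, since then the term $(\boldsymbol{\theta}+\mathbf{x})^T\mathbf{r}_{i,j,\ell}e_j$ grows strictly with $e_j$), but as a proof of the lemma for an arbitrary IC policy it remains incomplete at exactly the point you identify. Stating that strictness condition explicitly, rather than ``invoking strictness as the relevant case,'' is what is needed to close the argument.
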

The proof   follows from combining, IR and IC conditions, as well as  Assumption 1. We omit it due to brevity.
% \begin{proof}
% For the vertical structure, for any $k>i$:
% $$W_{i,j,\ell} \leq \frac{R_i}{v_i}\leq\frac{R_k - R_i}{v_{k}- v_i}$$  
% The first inequality follows from individual rationality and the second from Assumption 1. Hence,
% \begin{align}
%     R_{k} - v_{k}W_{i,j,\ell}\geq R_i - v_i W_{i,j,\ell} \geq P_{i,j,\ell}.
% \end{align}
% Therefore, users of type $(k,j,\ell)$ will get positive utility from entering as type $(i,j,\ell)$. Hence, because of  IC policy, they should gain more utility from entering as type $(k,j,\ell)$. This means that
% \begin{align}
%     R_{k} - v_k W_{k,j,\ell}  \geq P_{k,j,\ell},
% \end{align} which implies that the customers of type $(k,j,\ell)$ enter the system in full.
% Moreover, should any customer with  $v_{k} < v_i$ enter the system, type $(i,j,\ell)$ should have entered in full, which is a contradiction.

% For the horizontal structure, since customers of type $(i,j,\ell)$ partially entered the charging station network, we have:
% \begin{align}
% R_{i} - v_i W_{i,j,\ell} - P_{i,j,\ell} = 0.
% \end{align}
% From \eqref{IC3} we know that for all service types $(i,k,\ell), k<j$:
% $$P_{i,k,\ell} + v_i W_{i,k,\ell} \leq P_{i,j,\ell} + v_i W_{i,j,\ell}.$$
% Combining these two, we can write:
% $$R_{i} - v_i W_{i,k,\ell} - P_{i,k,\ell} \geq 0$$
% and hence, from the IR constraint \eqref{ir.const}, it follows that these customers enter in full.
% \end{proof}

 Therefore, at the Nash equilibrium, due to IC constraints, the solution structure of the effective arrival rates is similar to Fig. \ref{fig:struct}. The red borderline shows which user types should partially enter the system, i.e., where $0<\lambda_{i,j,\ell}<\Lambda_{i,j,\ell}$.  This means that not all users of type $(i,j,\ell)$ will join the system. Hence, from Lemma \ref{borderlemma}, we know that customers to the left of the  line will enter the system in full, and customers to the right  will not enter the system.  Next, we study the design of a socially-optimal IC pricing-routing policy.

%
%We also analyze this system under a  simplifying assumption:
%
%\begin{assumption} 
%We restrict our focus to routing policies that 
%\end{assumption}

 \section{Socially-optimal policy}\label{section.so}
Our charging stations are located at heterogeneous distances from the users' path and have different locational marginal prices and capacities. In the socially optimal policy, the CNO's goal is to choose a routing policy that maximizes the social welfare of all EV users with access to the network, which we can write as:
\begin{align} 
&\max_{\substack{\mathbf r_{i,j,\ell} \geq 0 \\0\leq \lambda_{i,j,\ell}\leq \Lambda_{i,j,\ell}}} \! \sum_{l = 1}^{B}\sum_{i = 1}^{V}\sum_{j=1}^{E}  \bigg[  R_i^{\ell} \lambda_{i,j,\ell}  - v_i \lambda_{i,j,\ell}
W_{i,j,\ell}(\boldsymbol{\lambda},\boldsymbol{R})  \nonumber \\&  ~~~~~~~~~~~~~~~~~~~~~~~~~~~~~- \boldsymbol{\theta}^T\mathbf r_{i,j,\ell}e_{j}\lambda_{i,j,\ell} \bigg] \label{so.obj}\\
&\mbox{s.t.} ~~~ \mathbf{1}^T \mbox{diag}({\bf y}_{\ell})\mathbf{r}_{i,j,\ell} = 1,~~\forall i\in\mathcal V,j\in \mathcal E,\ell \in \mathcal B,\label{SO.route} \\&~~~\sum_{l = 1}^{B}\sum_{i = 1}^{V}\sum_{j=1}^{E} \lambda_{i,j,\ell} e_j r_{i,j,\ell}^{(q)} \leq C_q, ~\forall q \in \{1,\ldots,Q\}\label{capacity.cons},
\end{align}
where $\boldsymbol{\theta} = [\theta_q]_{q = 1,\ldots,Q}$ denotes the vector of locational marginal prices of electricity at each charging station $q$, $\mathbf{r}_{i,j,\ell} = [r^q_{i,j,\ell}]_{q = 1,\ldots,Q}$ is a column vector of routing probabilities for service option $(i,j,\ell)$ to each charging station $q$, $\mathbf{R} = [\mathbf r_{i,j,\ell}]_{\forall i,j,\ell}$ is the matrix of routing probabilities for all service types, with the $\big[(\ell -1)\times E \times v +  E(i-1)+j\big]$-th column dedicated to type $(i,j,\ell)$, $C_q$ is the capacity of charging station $q$, and $\boldsymbol{\lambda} = [\lambda_{i,j,\ell}]_{\forall i,j,\ell}$ is the vector of effective arrival rates. The objective function is the sum of the reward received by admitted users to the system minus waiting and electricity costs, \eqref{SO.route} ensures that the routing probabilities sum up to one over all charging stations allowed for traveling preference $\ell$, and \eqref{capacity.cons} is the capacity constraint for each charging station. The waiting time function $W_{i,j,\ell}(\boldsymbol{\lambda},\boldsymbol{R})$ maps the effective expected arrival rate in each station into an expected waiting time (e.g., queueing models can be appropriate here).

 Can the CNO design an IC pricing policy which enforces the socially optimal routing solution \eqref{so.obj} as an equilibrium? Next, we propose such a price.
The first order necessary condition for for the problem \eqref{so.obj} is as follows:
\begin{align}
R_i^{\ell} - &v_i W_{i,j,\ell}(\boldsymbol{\lambda},\boldsymbol{R}) - \sum_{t,h,z} \left(\lambda_{t,h,z} v_{t} \frac{\partial W_{t,h,z}(\boldsymbol{\lambda},\boldsymbol{R})}{\partial \lambda_{i,j,\ell}} \right) \nonumber\\& -  \boldsymbol{\theta}^T \mathbf r_{i,j,\ell}e_{j} - {\mathbf{x}}^T \mathbf r_{i,j,\ell}e_j  + \gamma_{i,j,\ell} - \mu_{i,j,\ell}= 0,\label{firstorder}
\end{align}
with $\gamma_{i,j,\ell} \geq  0, \mu_{i,j,\ell} \geq 0$, $\gamma_{i,j,\ell}\lambda_{i,j,\ell} = 0$, $\mu_{i,j,\ell}(\lambda_{i,j,\ell} - \Lambda_{i,j,\ell})=0$, and ${\mathbf{x}} = [x_q]_{q = 1,\ldots,Q}$ as the Lagrange multiplier of the capacity constraint \eqref{capacity.cons}.
We can observe that the following prices will satisfy the IR constraints \eqref{ir.const}:
\begin{align}
P_{i,j,\ell} = \sum_{t = 1}^{V}\sum_{h=1}^{E}\sum_{z = 1}^{B} & \left( \frac{\partial W_{t,h,z}(\boldsymbol{\lambda},\boldsymbol{R})}{\partial \lambda_{i,j,\ell}} \lambda_{t,h,z} v_{t}\right) \nonumber\\& ~ + (\boldsymbol{\theta} + \mathbf{x})^T \mathbf r_{i,j,\ell}e_{j}.\label{marginal}
\end{align}
Next, we  show that the prices in \eqref{marginal}   also satisfy IC constraints \eqref{IC1}-\eqref{IC4}.

\begin{proposition}
With the prices defined in \eqref{marginal}, the solution of socially optimal problem \eqref{so.obj} defines an incentive compatible routing and pricing policy.
\end{proposition}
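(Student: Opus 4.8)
The plan is to show that the prices \eqref{marginal} satisfy every incentive-compatibility inequality \eqref{IC1}--\eqref{IC4}. First I would invoke Lemma~\ref{localic lemma} to replace the full family of global IC constraints by the finitely many \emph{local} (adjacent-option) comparisons in the three coordinates of value of time, energy demand, and travel preference. The workhorse of the argument is the compact identity obtained by substituting the definition \eqref{marginal} into the stationarity condition \eqref{firstorder}: after the externality and electricity-cost terms cancel, it reads
\[
R_i^{\ell} - v_i W_{i,j,\ell} - P_{i,j,\ell} \;=\; \mu_{i,j,\ell}-\gamma_{i,j,\ell} \;=:\; \delta_{i,j,\ell},
\]
so the net utility of a user who truthfully buys its own option equals $\delta_{i,j,\ell}$. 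Complementary slackness signs this exactly as in \eqref{ir.const}: $\delta_{i,j,\ell}=0$ on the interior, $\delta_{i,j,\ell}=\mu_{i,j,\ell}\ge 0$ where $\lambda_{i,j,\ell}=\Lambda_{i,j,\ell}$, and $\delta_{i,j,\ell}=-\gamma_{i,j,\ell}\le 0$ where $\lambda_{i,j,\ell}=0$. This already yields IR; the remaining work is to upgrade it to IC.

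The horizontal (energy) and travel-preference directions are the easy ones, and I would dispatch them first. Because such a deviation keeps the value-of-time index $i$ fixed, the coefficient $v_i$ in the price identity matches the user's own value of time, and the wait-time terms cancel entirely: the generalized cost $P_{i,j,k}+v_iW_{i,j,k}$ of any option reached by such a deviation collapses to $R_i^{k\text{-preference}}-\delta_{i,j,\cdot}$. The horizontal constraint then reduces to $\delta_{i,j+1,\ell}\le\delta_{i,j,\ell}$, which follows from the horizontal solution structure of Lemma~\ref{borderlemma} (lower energy demand is admitted in full, higher demand is not), and the travel constraint reduces to $R_i^{t}-R_i^{\ell}\ge \delta_{i,j,t}-\delta_{i,j,\ell}$ for $t\in\mathcal T_\ell$, where Assumption~1 supplies $R_i^{t}>R_i^{\ell}$ (since $|\mathcal G_t|=|\mathcal G_\ell|-1$) and the multiplier signs settle the mixed full/empty configurations.

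The vertical direction is the substantive case, because a value-of-time deviation changes the index $i$ and the wait times no longer cancel. Substituting the identity into the two local inequalities of Lemma~\ref{localic lemma} for the pair $(i,j,\ell),(i+1,j,\ell)$ and clearing the common wait terms yields the sandwich
\[
\begin{aligned}
(v_{i+1}-v_i)\,W_{i+1,j,\ell}+\delta_{i+1,j,\ell}-\delta_{i,j,\ell}
&\le R_{i+1}^{\ell}-R_i^{\ell}\\
&\le (v_{i+1}-v_i)\,W_{i,j,\ell}+\delta_{i+1,j,\ell}-\delta_{i,j,\ell}.
\end{aligned}
\]
Non-emptiness of this interval forces the monotonicity $W_{i+1,j,\ell}\le W_{i,j,\ell}$, and the ordering of the reward ratios in Assumption~2 is exactly what pins the gap $R_{i+1}^{\ell}-R_i^{\ell}$ into the required band once the multiplier signs from Lemma~\ref{borderlemma} are inserted case by case (full/full, empty/empty, and the single across-border comparison).

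I expect the main obstacle to be establishing the wait-time monotonicity used in the vertical sandwich \emph{at the social optimum itself}, rather than borrowing it from Lemma~\ref{icwaitinglemma}, whose proof already presupposes IC; since problem \eqref{so.obj} carries no IC constraints, the solution-structure results cannot be invoked without risking circularity. I would close this gap with an exchange/rearrangement argument: for fixed $(j,\ell)$ two neighbouring value-of-time classes share the energy demand $e_j$ and the admissible station set $\mathcal G_\ell$, so their routing vectors lie in a common simplex and can be swapped feasibly; if the higher-VoT class were assigned the larger expected wait, exchanging the two routings would strictly lower the waiting-cost term $\sum v_t\lambda_t W_t$ in \eqref{so.obj}, contradicting optimality. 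The remaining delicacy is purely bookkeeping of the multipliers $\mu$, $\gamma$ and the capacity multiplier $\mathbf x$ across the rationed/unrationed boundary types, confirming that in every configuration the signed quantity $\delta_{i+1,j,\ell}-\delta_{i,j,\ell}$ lands on the correct side of each local inequality.
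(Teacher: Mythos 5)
Your setup is sound as far as it goes: the identity $R_i^{\ell}-v_iW_{i,j,\ell}-P_{i,j,\ell}=\mu_{i,j,\ell}-\gamma_{i,j,\ell}$ does follow from substituting \eqref{marginal} into \eqref{firstorder}, it does give IR via complementary slackness, and rewriting each local IC inequality in terms of the net utilities $\delta_{i,j,\ell}$ is a faithful reformulation. The gap is that the reformulated inequalities are never actually proved. In the vertical case, take the configuration where type $(i,j,\ell)$ is rationed ($\delta_{i,j,\ell}=0$) and type $(i+1,j,\ell)$ is admitted in full ($\delta_{i+1,j,\ell}=\mu_{i+1,j,\ell}>0$): the left half of your sandwich becomes, after substituting $\mu_{i+1,j,\ell}=R_{i+1}^{\ell}-v_{i+1}W_{i+1,j,\ell}-P_{i+1,j,\ell}$, exactly $R_i^{\ell}\le v_iW_{i+1,j,\ell}+P_{i+1,j,\ell}$ --- i.e., the IC condition you set out to prove, now hinging on the \emph{magnitude} of a Lagrange multiplier rather than its sign. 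Assumption 2 and wait-time monotonicity do not control that magnitude. The same problem recurs in the horizontal and travel-preference directions: when two adjacent types are both fully admitted, the required $\delta$-monotonicity is a comparison $\mu_{i,t,\ell}\le\mu_{i,k,\ell}$ between two strictly positive multipliers, which ``sign bookkeeping'' cannot deliver. Moreover, your horizontal argument invokes Lemma \ref{borderlemma}, whose statement and proof presuppose an IC policy --- the very circularity you correctly flag for Lemma \ref{icwaitinglemma} but then commit elsewhere.

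The missing idea --- and the entire content of the paper's proof, which follows Theorem 1 of \cite{bradford1996pricing} --- is a cross-type perturbation of the globally optimal solution of \eqref{so.obj}: inject an infinitesimal mass $\delta$ of type-$(i,j,\ell)$ users and route and charge them as if they were type $(m,j,\ell)$. This yields a feasible allocation (splitting a type across routing vectors is equivalent to a convex combination), so global optimality forces the directional derivative at $\delta=0$ to be nonpositive; after substituting the price \eqref{marginal}, that inequality reads $R_i^{\ell}\le v_iW_{m,j,\ell}+P_{m,j,\ell}$, which combined with IR for type $(i,j,\ell)$ (when $\lambda_{i,j,\ell}>0$) is precisely \eqref{IC1}--\eqref{IC2}; the other deviation directions are handled identically. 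This cross-type directional derivative is not among the KKT stationarity conditions you work with (those only differentiate along each type's own allocation), which is why your multiplier bookkeeping cannot close the argument. Your exchange argument for establishing wait-time monotonicity at the optimum without circularity is a reasonable observation, but it addresses a side condition rather than the main obstruction.
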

\begin{proof}
The proof is inspired by that of Theorem 1 in \cite{bradford1996pricing}. To prove incentive compatibility, we need to choose two arbitrary service options and show that with the prices given by \eqref{marginal}, customers from the first type are better off choosing their own option over the other. We first consider vertical IC constraints \eqref{IC1}-\eqref{IC2}. Suppose, we have the globally optimal solution of \eqref{so.obj}. Assume  customers of class $(i,j,\ell)$ enter the system and pretend to be of type $(m,j,\ell)$. We will increase the effective arrival rate of customers of type $(i,j,\ell)$ by an infinitesimal amount  $ \delta$ and treat them as customers of type $(m,j,\ell)$. Hence, because we were at the globally optimal solution of \eqref{so.obj}, we can write:
\begin{align}&\frac{\partial}{\partial { \delta}}\bigg[ R_{i}^{\ell}{ \delta}  - \sum_{(t,h,z) \neq (i,j,\ell)}  v_{t} \lambda_{t,h,z}W_{t,h,z}(\boldsymbol{\lambda} + \boldsymbol{ \delta}_{m,j,\ell},\boldsymbol{R}) \nonumber\\&~ - v_{i} \lambda_{i,j,\ell}W_{i,j,\ell}(\boldsymbol{\lambda}+ \boldsymbol{ \delta}_{m,j,\ell},\boldsymbol{R})  -  { \delta} v_{i} W_{m,j,\ell}(\boldsymbol{\lambda}+ \boldsymbol{ \delta}_{m,j,\ell},\boldsymbol{R})\nonumber\\& -  { \delta} \boldsymbol{\theta}^T\mathbf r_{m,j,\ell}e_{j} - { \delta} {\mathbf{x}}^T \mathbf r_{m,j,\ell}e_j    \bigg]_{{ \delta}=0} \leq 0, \nonumber
\end{align}
Hence, we can write:
\begin{align} R_i^{\ell} &- \sum_{t,h,z} \left( \lambda_{t,h,z} v_{t} \frac{\partial W_{t,h,z}(\boldsymbol{\lambda},\boldsymbol{R})}{\partial \lambda_{m,j,\ell}} \right)  \nonumber\\& -   v_{i}  W_{m,j,\ell}(\boldsymbol{\lambda},\boldsymbol{R})-  \boldsymbol{\theta}^T\mathbf r_{m,j,\ell}e_{j}- { \delta} {\mathbf{x}}^T \mathbf r_{m,j,\ell}e_j    \leq 0.\label{derv.q}\end{align}
Using the price in \eqref{marginal}, this leads to:
$$R_i^{\ell} \leq  v_{i}   W_{m,j,\ell}(\boldsymbol{\lambda},\boldsymbol{R}) + P_{m,j,\ell}$$
and from IR constraints \eqref{ir.const}, we know that if $\lambda_{i,j,\ell}>0$, we need to have $R_i^{\ell}\geq v_{i}   W_{i,j,\ell}(\boldsymbol{\lambda},\boldsymbol{R}) + P_{i,j,\ell}$. Therefore,
$$v_{i}   W_{i,j,\ell}(\boldsymbol{\lambda},\boldsymbol{R}) + P_{i,j,\ell} \leq  v_{i}   W_{m,j,\ell}(\boldsymbol{\lambda},\boldsymbol{R}) + P_{m,j,\ell},$$
which proves that  vertical IC constraints hold.
The proof for \eqref{IC3}-\eqref{IC4} is similar and we remove it due to brevity.
\end{proof}

Our results up to this point are in their most general form. The expected waiting time $ W_{i,j,\ell}(\boldsymbol{\lambda},\boldsymbol{R}) $ associated with each type $(i,j,\ell)$ can be defined using queueing theory as a weighted sum of wait times for the different charging stations, or can have any other general form that arises in reality. However, we would like to note that   the problem \eqref{so.obj} is not convex in general, and hence finding the   solution is not straightforward in all cases. While this is not devastating as this problem only has to be solved for planning,  we will study the  problem in the special case of hard capacity constraints next. This allows us to exploit the special structure  highlighted in Lemma \ref{borderlemma} to  characterize the optimal routing policy through solving linear programs. This is especially useful for our numerical experiments.

{

\subsection{Additional modeling factors: distribution network constraints and behind-the-meter solar }

We would like to note that as opposed to residential and workplace charging, where temporal load shifting is possible for grid support, fast charging stations do not provide such opportunities (unless battery swapping methods are employed).  Our proposed method allows the CNO to consider the following elements when optimization pricing-routing decisions for charging stations: 1) the locational electricity prices for each charging station (already included in \eqref{so.obj}); 2) behind the meter RES supply availability (such as solar generation) at each station; 3) distribution network information and constraints. We will elaborate on the latter two additions in this section.
 
In order to additionally consider network constraints such as line loading limits 
 (defined below as the the total line capacities excluding the loadings induced by conventional demands) the CNO can consider adding the following constraint to the CNO’s optimization problem \eqref{so.obj}:
\begin{align}\label{ntw.const}
    \sum_{q=1}^{Q} D E_{q} \bigg( \sum_{\ell = 1}^{B}\sum_{i = 1}^{V}\sum_{j=1}^{E} \lambda_{i,j,\ell} e_j r_{i,j,\ell}^{(q)} \bigg) \leq f_t, ~\forall t.
\end{align}  The constraint is similar to those that adopted in \cite{6913579,6588984} for temporal load shifting of EV load in distribution networks. The reader should note that  if this constraint is added to \eqref{so.obj},  the Lagrange multiplier of this constraint should be added to the prices we defined in \eqref{marginal}.

Second, we would like to note that behind-the-meter solar energy available at stations can be easily accommodated by our model by adding in virtual stations with electricity price $0$, traveling time equal to the station which is equipped by solar generation, and capacity equal to the currently available solar generation. In this case, the CNO is able to  observe the available behind-the-meter solar integration in real-time, and design pricing-routing schemes in order to efficiently use the real-time solar generation. This addition will help us better highlight the differences between the routing solutions of the social-welfare maximizing and profit maximizing policies that we will discuss in our numerical results in Section \ref{section.sim}. 

}

\subsection{The Special Case of Hard Capacity Constraints}\label{linwait}
In this special case, we assume that station { queuing time (i.e., $\varrho_{ q} = 0$, $\forall { q} = 1,\dots,Q$)}  will be equal to zero as long as the station is operated below capacity. Furthermore, we assume that the travel time from a main corridor to reach each charging station $k$ is a known and constant parameter $d_{ q}, { q} = 1, \ldots, Q$. Therefore, the expected wait time for customers  of type $(i,j,\ell)$ is:
\begin{equation}
   W_{i,j,\ell} = \sum_{{ q}=1}^{Q} d_{ q} r_{i,j,\ell}^{({ q})}. \label{ahmad new watoin}
\end{equation}
Without loss of generality, we assume that stations are ordered such that $d_1 < d_2 < \ldots < d_Q$. We can now rewrite the socially-optimal problem \eqref{SO.route} as:

\begin{align}
\max_{\substack{\mathbf r_{i,j,\ell} \geq 0\\0\leq \lambda_{i,j,\ell}\leq \Lambda_{i,j,\ell}}} \sum_{l=1}^{B} \sum_{i=1}^{V} \sum_{j=1}^{E} \omega_{i,j,\ell},\label{ahmad. raveshe jadide social1}
\end{align}
where
\begin{align}
\omega_{i,j,\ell} = \lambda_{i,j,\ell} \bigg[ & R_i^{\ell} - \bigg( \sum_{{ q}=1}^{Q-1} (v_i(d_{ q} - d_Q) +  e_j(\theta_{q} - \theta_Q))r_{i,j,\ell}^{({ q})}\bigg) \nonumber\\ &~ -(v_id_Q + e_j \theta_Q)\bigg].
\end{align}

% {\red why do you need the following lemma? It's unclear.}
 %\begin{lemma}
% Assuming there is no capacity for charging stations,  solution of  \eqref{ahmad. %raveshe jadide social1} is derived for all $(i,j)$ as follows:
 
%For customers of type $(i,j)$, let $$ s = \underset{k={1,2,...,Q-1}}{\mathrm{argmin}} %(v_i(d_k - d_q) + e_j(\theta_k - \theta_q)).$$ If  $(v_i(d_s - d_q) + e_j(\theta_s - %\theta_q)) \leq 0$, we  choose $r_{i,j}^{(s)} = 1$, ($ \forall k\neq s, r_{i,j}^{k} = %0$), otherwise we choose $r_{i,j}^{(Q)} = 1$.
% \end{lemma}

  We assume that the furthest charging station $Q$ is accessible to all customers with each traveling preference and that $\theta_Q \leq \theta_i, \forall i = 1,\dots,Q-1$. This could represent an inconvenient outside option available to all customers. Additionally, for each charging station $k=1,\dots,Q$, we calculate $ o_s = \big(v_1(d_s - d_Q) +  e_{E}(\theta_s - \theta_Q)\big) $. Then, we label the charging stations with the set ${\bf s} = [s_i]_{i = 1,\dots,Q} $ such that $o_{s_1}\leq o_{s_2}\leq...\leq o_{s_Q}$. The next lemma characterizes the specific order in which customers are assigned to these stations.
 
 \begin{lemma} \label{ahmad structure lemma}
The optimal solution of \eqref{ahmad. raveshe jadide social1} satisfies the following two properties:
\begin{enumerate}
    \item   If customers of type $(i,j,\ell)$ are assigned to  station $s_k$, customers of type $(n,j,\ell)$ with $v_n < v_i$ are only assigned to stations $s_m, m \geq k$. 
    \item  If customers of type $(i,j,\ell)$ are assigned to station $s_k$, customers of type $(i,n,\ell)$ with $e_n>e_j$ are only assigned to stations $s_m, m \geq k$.
\end{enumerate} 
  \end{lemma}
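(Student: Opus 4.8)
The plan is to treat \eqref{ahmad. raveshe jadide social1} as a linear assignment problem in the routing variables once the optimal arrival rates $\lambda_{i,j,\ell}$ are fixed, and to prove both properties by a pairwise interchange (exchange) argument. Writing $c^{(q)}_{i,j,\ell} = v_i(d_q - d_Q) + e_j(\theta_q - \theta_Q)$ and noting $c^{(Q)}_{i,j,\ell}=0$, maximizing $\sum_{i,j,\ell}\omega_{i,j,\ell}$ over $\mathbf r$ is equivalent to minimizing $\sum_{i,j,\ell,q}\lambda_{i,j,\ell}\,c^{(q)}_{i,j,\ell}\,r^{(q)}_{i,j,\ell}$ subject to the per-type simplex constraints \eqref{SO.route} and the energy-weighted capacity constraints \eqref{capacity.cons}. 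I would fix a traveling preference $\ell$ throughout (both properties compare types sharing $\ell$), so that all stations involved lie in $\mathcal G_\ell$ and every constructed perturbation remains feasible with respect to the accessibility set. Here ``assigned to station $s_k$'' means the corresponding flow $\lambda_{i,j,\ell}\,r^{(s_k)}_{i,j,\ell}$ is strictly positive, which in particular requires the type to have entered the system.

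For property 1, I would argue by contradiction: suppose type $(i,j,\ell)$ is assigned to $s_k$ while some lower-VoT type $(n,j,\ell)$ with $v_n<v_i$ is assigned to $s_m$ with $m<k$. Since the two types share the same energy level $e_j$, I can construct a capacity-preserving swap that moves an infinitesimal mass $\delta$ of type $(i,j,\ell)$ from $s_k$ to $s_m$ and the same mass $\delta$ of type $(n,j,\ell)$ from $s_m$ to $s_k$; because the energies coincide, the energy-weighted load at each of $s_k$ and $s_m$ is unchanged, so \eqref{capacity.cons} still holds and each type's total flow is preserved. A direct computation shows that the $e_j(\theta_\cdot-\theta_Q)$ contributions cancel, leaving
\begin{align}
\Delta = \delta\,(v_i - v_n)\,(d_{s_m}-d_{s_k}).\nonumber
\end{align}
As $v_i>v_n$, this perturbation strictly decreases the objective whenever $s_m$ is the station that the ordering marks as more favorable, contradicting global optimality; hence the lower-VoT type cannot occupy a station that precedes $s_k$.

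For property 2, the two types $(i,j,\ell)$ and $(i,n,\ell)$ share the VoT $v_i$ but differ in energy, $e_n>e_j$, so an equal-mass swap would perturb \eqref{capacity.cons}. I would instead use an energy-balanced swap: move mass $\epsilon$ of the high-energy type $(i,n,\ell)$ from $s_m$ to $s_k$ and mass $\epsilon\,e_n/e_j$ of the low-energy type $(i,j,\ell)$ from $s_k$ to $s_m$, chosen precisely so that the energy-weighted load at each station is conserved. Again the price terms cancel, and the objective change reduces to
\begin{align}
\Delta = \epsilon\,v_i\,(d_{s_k}-d_{s_m})\,\Big(1-\tfrac{e_n}{e_j}\Big),\nonumber
\end{align}
whose sign is controlled by $(d_{s_k}-d_{s_m})$ because $1-e_n/e_j<0$. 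As before, a violation of the claimed nesting makes $\Delta<0$, contradicting optimality of the routing.

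The main obstacle is to make the abstract ``more favorable'' ordering appearing in the two swaps coincide with the single, explicitly defined labeling $o_{s_1}\le\cdots\le o_{s_Q}$ in the statement: each individual interchange isolates a comparison of the station parameters (effectively the travel times $d_{s_m}$ versus $d_{s_k}$), and I must verify that disagreement of the optimal assignment with the $o$-order is exactly the configuration that produces an improving feasible swap, simultaneously for the VoT direction and the energy direction. Handling this cleanly will likely require invoking the roles of the designated outside option $Q$ (accessible to all, with $\theta_Q$ minimal) and of the extreme type $(v_1,e_E)$ used to define $o_s$, together with Assumption 1 to align rewards across $\ell$. Two bookkeeping points must also be checked: that the swapped cells carry enough mass for an infinitesimal perturbation (guaranteed by positivity of the relevant flows) and that ties in the $o$-values are handled, so that the stated weak nesting ($m\ge k$) holds. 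An equivalent route I would keep in reserve is to write the KKT conditions with the capacity multipliers $x_q$ and show that each type routes only to stations minimizing the effective per-customer cost $v_i(d_q-d_Q)+e_j(\theta_q-\theta_Q+x_q)$, from which the monotone single-crossing assignment follows.
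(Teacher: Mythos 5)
Your proof takes essentially the same route as the paper's: a pairwise interchange argument by contradiction that swaps an infinitesimal amount of flow between the two stations and shows the objective strictly improves whenever the claimed nesting is violated. You are in fact more careful than the paper --- you explicitly preserve the capacity constraints (equal-mass swap when the energies coincide, energy-balanced swap otherwise) and compute the objective change $\Delta$, and the residual issue you flag (that each swap reduces to a comparison of travel times $d_{s_m}$ versus $d_{s_k}$, which need not coincide with the labeling $o_{s_1}\le\cdots\le o_{s_Q}$ built from the extreme type $(v_1,e_E)$) is a genuine point that the paper's own one-line proof also leaves unaddressed.
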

\begin{proof}
We prove both statements by contradiction. Consider the first statement. Suppose there is another optimal solution in which for the customers of type $(n,j,\ell)$ there is a positive probability $r^{(m)}_{n,j,\ell}$ of assignment to station $s_m$ while customers with type $(i,j,\ell)$ have been assigned to a less desirable station $s_k$ with $k>m$. However, we can have another set of routing probabilities such that $r_{n,j,\ell}^{(m)'} = \big(r^{(m)}_{n,j,\ell}-\eps* \lambda_{i,j,\ell}/\lambda_{n,j,\ell}\big)$, $r_{i,j,\ell}^{(m)'} = \eps* \lambda_{i,j,\ell}/\lambda_{n,j,\ell}$, and $r_{i,j,\ell}^{(k)'} = \big(r^{(k)}_{i,j,\ell}- \eps* \lambda_{i,j,\ell}/\lambda_{n,j,\ell}\big)$, which lead to another feasible solution that increases the objective function of \eqref{ahmad. raveshe jadide social1}. Therefore, it is contradictory to the assumption of optimality of the first solution.  The proof of the second statement is similar, and we remove it for brevity.
\end{proof}
  
  \begin{lemma} \label{station.order}
In the optimal solution of problem \eqref{ahmad. raveshe jadide social1}, if charging stations $s_n$ is not used in full capacity, then charging stations $s_m$ with $m>n$ will be empty. 
 \end{lemma}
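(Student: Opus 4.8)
The plan is to argue by contradiction via a capacity-exchange (rerouting) perturbation, after first rewriting the per-type objective in a more transparent form. Using the routing-simplex constraint \eqref{SO.route}, i.e. $\sum_q r_{i,j,\ell}^{(q)} = 1$ over the accessible stations, the summand $\omega_{i,j,\ell}$ collapses to
$$\omega_{i,j,\ell} = \lambda_{i,j,\ell}\Big[ R_i^{\ell} - \sum_{q=1}^{Q} c_{i,j}^{(q)}\, r_{i,j,\ell}^{(q)}\Big], \qquad c_{i,j}^{(q)} := v_i d_q + e_j \theta_q,$$
so that, for fixed effective arrival rates $\boldsymbol\lambda$, maximizing \eqref{ahmad. raveshe jadide social1} is equivalent to minimizing the total routing cost $\sum_{i,j,\ell}\lambda_{i,j,\ell}\sum_q c_{i,j}^{(q)} r_{i,j,\ell}^{(q)}$. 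Note that $o_s$ is exactly this routing cost, relative to the universal station $Q$, of the extreme customer with the smallest value of time $v_1$ and the largest energy demand $e_E$.

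Next I set up the contradiction. Suppose the claim fails: at an optimum there is a station $s_n$ operating below capacity in \eqref{capacity.cons} while some station $s_m$ with $m>n$ carries positive load, i.e. there is a type $(i,j,\ell)$ with $\lambda_{i,j,\ell}>0$ and $r_{i,j,\ell}^{(s_m)}>0$. Since $s_m$ is used by preference $\ell$ it is accessible to $\ell$; because $s_n$ has the strictly smaller $o$-value (and, using that $d_Q$ is largest and $\theta_Q$ smallest, is at worst the universal outside option $Q$), I would first record that $s_n$ is also an admissible target for this type. I then perturb the routing of this single type, shifting an infinitesimal mass $\eps>0$ from $s_m$ to $s_n$: set $r_{i,j,\ell}^{(s_m)\prime} = r_{i,j,\ell}^{(s_m)} - \eps$ and $r_{i,j,\ell}^{(s_n)\prime} = r_{i,j,\ell}^{(s_n)} + \eps$. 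This preserves \eqref{SO.route}, and for $\eps$ small enough the added load $\eps\,\lambda_{i,j,\ell} e_j$ at $s_n$ is absorbed by its slack, so \eqref{capacity.cons} still holds and the perturbed point is feasible. The resulting change in the objective is $\eps\,\lambda_{i,j,\ell}\big(c_{i,j}^{(s_m)} - c_{i,j}^{(s_n)}\big)$.

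The heart of the argument — and the step I expect to be the main obstacle — is to show that this change is strictly positive, i.e. that $s_n$ is no more costly than $s_m$ for the type actually occupying $s_m$:
$$c_{i,j}^{(s_m)} - c_{i,j}^{(s_n)} = v_i (d_{s_m}-d_{s_n}) + e_j(\theta_{s_m}-\theta_{s_n}) \ge 0.$$
The difficulty is that $o_{s_m}\ge o_{s_n}$ only encodes this inequality for the single reference customer $(v_1,e_E)$, whereas the occupant of $s_m$ is an arbitrary type. I would close the gap by bounding the cost difference below by its value at a corner of the type box $[v_1,v_V]\times[e_1,e_E]$, writing it as $o_{s_m}-o_{s_n} + (v_i-v_1)(d_{s_m}-d_{s_n}) + (e_j-e_E)(\theta_{s_m}-\theta_{s_n})$ and invoking the ascending orders $v_i\ge v_1$, $e_j\le e_E$. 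The clean case is when $s_m$ is the farther and cheaper station ($d_{s_m}\ge d_{s_n}$, $\theta_{s_m}\le\theta_{s_n}$), where both correction terms are nonnegative and the bound follows at once; the delicate case is the opposite sign pattern, where I would lean on the solution structure already established in Lemma \ref{ahmad structure lemma} — that high-index (large-$o$) stations such as $s_m$ are occupied only by the lower-VoT, higher-energy types for which $(v_1,e_E)$ is the binding corner — to rule out a harmful reroute. Once $c_{i,j}^{(s_m)}\ge c_{i,j}^{(s_n)}$ is secured, the perturbation strictly increases the objective, contradicting optimality and forcing $s_m$ to be empty.
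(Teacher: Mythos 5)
Your argument is the same one the paper uses: assume $s_n$ has slack while some type $(i,j,\ell)$ has positive mass at $s_m$ with $m>n$, shift an $\epsilon$ of that type's routing probability from $s_m$ to $s_n$, check feasibility against the slack, and read off the change in the objective. Your reduction of the whole question to the sign of $c_{i,j}^{(s_m)}-c_{i,j}^{(s_n)}=v_i(d_{s_m}-d_{s_n})+e_j(\theta_{s_m}-\theta_{s_n})$ correctly isolates the crux, and your corner decomposition $o_{s_m}-o_{s_n}+(v_i-v_1)(d_{s_m}-d_{s_n})+(e_j-e_E)(\theta_{s_m}-\theta_{s_n})$ does dispose of the sign patterns in which $s_m$ is no closer and no cheaper than $s_n$, or is farther and cheaper.

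The gap is the ``delicate case'' you flag and do not close, and the escape route you sketch does not close it. Lemma \ref{ahmad structure lemma} only compares types that agree in the energy coordinate (statement 1) or in the VoT coordinate (statement 2); it never forces the occupant of a high-$o$ station toward the reference corner $(v_1,e_E)$, because two types differing in both coordinates are simply not comparable under it. Concretely, let $s_n$ be far and cheap and $s_m$ close and expensive with $o_{s_n}\le o_{s_m}$ (the reference customer, having the smallest VoT and largest demand, weakly prefers $s_n$), and let $s_m$ be occupied by a high-VoT, low-demand type; then $c_{i,j}^{(s_m)}-c_{i,j}^{(s_n)}$ can be strictly negative, your reroute strictly \emph{decreases} welfare, and no contradiction is obtained. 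In that configuration an assignment with $s_n$ slack and $s_m$ nonempty is genuinely optimal (each type simply sits at its own cheapest accessible station), so the delicate case cannot be argued away --- it has to be excluded by an additional assumption aligning the distance and price orderings, or by an ordering of stations that is uniform across types. For what it is worth, the paper's own proof is the identical one-type reroute and disposes of the sign question with a one-line appeal to ``the structure we found in Lemma \ref{ahmad structure lemma}''; it is no more complete than your proposal on exactly this point, so you have in effect reproduced the paper's argument while, to your credit, locating the step it leaves unjustified.
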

The proof is provided in the Appendix.

%Considering the customers of type $(i,j)$, the term corresponding to $(i,j)$ in the objective function of \eqref{ahmad. raveshe jadide social1}  is as follows:
 %$ \lambda_{i,j} \bigg[ [v_i(d_1-d_Q) + e_j(\theta_1-\theta_Q)]r_{i,j}^{(1)}+ \hdots +[v_i(d_{Q-1}-d_Q) + e_j(\theta_{Q-1}-\theta_Q]r_{i,j}^{(Q-1)} \bigg]$, and the term corresponding to $(n,j)$  is as follows:  $\lambda_{n,j}\bigg [ [v_n(d_1-d_Q) + e_j(\theta_1-\theta_Q)]r_{n,j}^{(1)}+ \hdots +[v_n(d_{Q-1}-d_Q) + e_j(\theta_{Q-1}-\theta_Q)]r_{n,j}^{(Q-1)} \bigg]$. By comparing these two terms, we can notice that the coefficient of routing probability correspond to any station for $(i,j)$  is smaller than that of $(n,j)$. Thus, if in an optimal solution we have $r_{i,j}^{(k)} = 0, r_{n,j}^{(k)} = \alpha > 0$ and $ r_{i,j}^{(m)} = \beta > 0$ for $m>k$, we can get a better solution by this routing probabilities:  $ r_{i,j}^{'(k)} = \alpha > 0, r_{n,j}^{'(k)} = 0$ and $r_{i,j}^{'(m)} = 0,  r_{n,j}^{'(m)} = \beta  > 0$. 

 The takeaway is that in this special case, 1) customers  with higher value of time and lower energy demand receive higher priority in joining stations with lower value of $o_s$; 2) stations are filled in order. This special structure allows us to find the globally optimal solution of non-convex quadratic problem \eqref{ahmad. raveshe jadide social1} by  admitting customers with  higher priority to charging stations with lower value of $o_S$ until they are full. Each station is then associated with a borderline similar to that of Fig. \ref{fig:struct}. User types that fall between the border lines of charging stations $s_{k-1}$ and $s_{k}$ will be routed to charging station $s_k$, whereas user types that fall on the borderline of station $s_k$ will  be partially routed to station $s_k$. User types that fall on the right side of border line of charging station $s_k$ will not be routed to  station $s_k$.

We consider the non-trivial case where all the customers receive positive utility from joining all the charging stations in their traveling preference (otherwise that station will be removed from the preference set). Hence, the CNO will assign customers to charging stations until  either the stations are full or all customers have been admitted. This means that we can assume that the set of available charging stations is:\begin{align}
    \mathcal{X} = \{s_i : v_{V} (d_i - d_Q) + e_1 (\theta_i - \theta_Q) \leq 0 \}, 
\end{align}and the set of potential admittable customers is: \begin{align}
    \mathcal{Y} = \{(i,j,\ell) : R_{i}^{\ell} - \big( v_i d_Q + e_j \theta_Q \big) \geq 0 \}.\end{align}
Exploiting the special solution structure highlighted in Lemmas \ref{ahmad structure lemma} and \ref{station.order}, Algorithm \ref{alg.gloabl}  determines the optimal solution of problem \eqref{ahmad. raveshe jadide social1}. This is done by adding an extra virtual charging station, $s_{Q+1}$, without any capacity constraint such that: \begin{align}
  s_{Q+1} & \in \mathcal{G}_{\ell}, \forall \ell \in \mathcal{B},\\
   \bigg(\max_{\substack{\ell \in \mathcal{B}}} 
    R_{V}^{\ell} \bigg) & < v_1 d_{Q+1} + e_1 \theta_{Q+1}.
\end{align}
Therefore, assigning customers to the charging station $s_{Q+1}$ has negative effect on the social welfare. In step 2, it  admits all types of customers in full, i.e., $\lambda_{i,j,\ell} = \Lambda_{i,j,\ell}, \forall (i,j,\ell)$. After fixing the variable $\lambda_{i,j,\ell} = \Lambda_{i,j,\ell}$, the resulting linear program (LP) of problem \eqref{ahmad. raveshe jadide social1} is referred to as the Border-based Decision Problem (BDP), and its solution determines the temporary allocation (routing probabilities), denoted by $\mathbf{h}_{i,j,\ell}$ = $[h_{i,j,\ell}^{({ q})}]_{{ q} = 1,\dots,Q+1}$, of admitted customers. It  removes the partition of customers that join the virtual charging station as it is shown in step 3.

% Then, in step 5, the algorithm \ref{alg.gloabl} checks whether admitted users in a system have negative effects on the social welfare or not, and it  removes the ones that reduce the objective function \eqref{ahmad. raveshe jadide social1}. In step 6, it will solve BDP with the new effective arrival rates to find the optimal border line for each charging station.

  \begin{algorithm}\label{alg.gloabl}
 \SetAlgoLined
Add virtual station $s_{Q+1}$ without capacity constraint\\
Set $\lambda_{i,j,\ell} = \Lambda_{i,j,\ell}$, $\mathbf{r}_{i,j,\ell} = \mathbf{0}$ $(\forall i,j,\ell)$ \\
Solve BDP (temporary routing probabilities), and set: \begin{align}
&r_{i,j,\ell}^{({ q})} =  h_{i,j,\ell}^{({ q})} ~\text{for}~ { q}=1,\dots,Q \nonumber\\& \lambda_{i,j,\ell} = \Lambda_{i,j,\ell}(1-h_{i,j,\ell}^{(Q+1)})\nonumber
\end{align} \\
% Update:  
% \[
%   \lambda_{i,j,\ell} \leftarrow   \left\{\begin{array}{lr}
%       \lambda_{i,j,\ell}, ~~ R_i^{\ell}-(v_i d_Q + e_j \theta_Q) \geq 0\\
%       0,  ~~~~~~ \text{else}
%         \end{array}\right.
%   \]\\

% Solve BDP with the effective arrival rate $\lambda_{i,j,\ell}$\\ 
Report the optimal solution : 
% \begin{align}
%     &[{r}^{(k)^\star}_{i,j,\ell}]_{k = 1,\dots,Q} = [h^{(k)}_{i,j,\ell}]_{k = 1,\dots,Q}\nonumber\\& \lambda_{i,j,\ell}^{\star} = \lambda_{i,j,\ell} \nonumber
% \end{align}
 \[
  (\mathbf{R}^{\star}, \boldsymbol{\lambda}^{\star} ) =  \left\{\begin{array}{lr}
       [{r}^{({ q})^\star}_{i,j,\ell}]_{{ q} = 1,\dots,Q} = [r^{({ q})}_{i,j,\ell}]_{{ q} = 1,\dots,Q}\\
        \lambda_{i,j,\ell}^{\star} = \lambda_{i,j,\ell}
        \end{array}\right.
  \]

 \caption{Optimal Admission and Routing}
\end{algorithm}

\begin{theorem}\label{algo.theorem}
 Algorithm 1 will find the globally optimal solution (i.e., the globally optimal effective arrival rates and routing probabilities) for problem \eqref{ahmad. raveshe jadide social1}.
 \end{theorem}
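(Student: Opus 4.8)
The plan is to exploit the fact that the only source of non-convexity in \eqref{ahmad. raveshe jadide social1} is the bilinear product $\lambda_{i,j,\ell}r_{i,j,\ell}^{(q)}$, and that the virtual station converts the admission decision into a pure routing decision, so that once each $\lambda_{i,j,\ell}$ is frozen at $\Lambda_{i,j,\ell}$ the Border-based Decision Problem (BDP) becomes a genuine linear program. I would first record the per-customer social welfare of serving type $(i,j,\ell)$ at station $q$, namely $c_{i,j,\ell}^{(q)} := R_i^\ell - v_i d_q - e_j\theta_q$, and observe that the objective equals $\sum_{i,j,\ell}\sum_{q}\lambda_{i,j,\ell}r_{i,j,\ell}^{(q)}c_{i,j,\ell}^{(q)}$, hence is linear in the flows $f_{i,j,\ell}^{(q)}=\lambda_{i,j,\ell}r_{i,j,\ell}^{(q)}$. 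Freezing $\lambda=\Lambda$ and appending the uncapacitated station $s_{Q+1}$ therefore yields an LP in the variables $h_{i,j,\ell}^{(q)}$, solvable to global optimality, and the recovery step of Algorithm~\ref{alg.gloabl} returns a point that is feasible for \eqref{ahmad. raveshe jadide social1}. The content of the theorem is thus entirely that this LP optimum, once the virtual flow $h_{i,j,\ell}^{(Q+1)}$ is stripped off, is a \emph{global} optimizer of the original non-convex program.

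To prove optimality I would argue in two stages. First, I would establish that the greedy allocation — fill the stations in increasing order of $o_s$, giving priority within a travel preference to higher-VoT and lower-energy customers, and route the overflow to $s_{Q+1}$ — is globally optimal for \eqref{ahmad. raveshe jadide social1}. This is where Lemmas~\ref{ahmad structure lemma} and~\ref{station.order} do the work: any optimum must already have this layered structure, so the exchange argument used in the proof of Lemma~\ref{ahmad structure lemma} (moving a unit of flow from a higher-$o$ station to a lower-$o$ one occupied by a lower-priority customer strictly increases the objective) shows the greedy solution attains the optimum. Second, I would show Algorithm~\ref{alg.gloabl} computes exactly this greedy allocation. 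The condition $\max_\ell R_V^\ell < v_1 d_{Q+1}+e_1\theta_{Q+1}$ forces $o_{s_{Q+1}}$ to be the largest of all the $o_s$ and makes $c_{i,j,\ell}^{(Q+1)} = R_i^\ell - v_i d_{Q+1} - e_j\theta_{Q+1} < 0$ for every type; hence by Lemma~\ref{station.order} the virtual station is filled last, so the flow it absorbs is precisely the overflow the greedy rule rejects. Setting $\lambda_{i,j,\ell}=\Lambda_{i,j,\ell}(1-h_{i,j,\ell}^{(Q+1)})$ and keeping the real-station probabilities then reproduces the greedy admission-and-routing pattern.

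The main obstacle is reconciling the two objectives at the level of \emph{which} customers are admitted: in \eqref{ahmad. raveshe jadide social1} a rejected customer contributes welfare $0$, whereas in the BDP the same customer, routed to $s_{Q+1}$, contributes $\beta_{i,j,\ell} := c_{i,j,\ell}^{(Q+1)} < 0$. Rewriting the BDP objective in terms of the real-station flows gives $\sum_{q\le Q} f_{i,j,\ell}^{(q)}\big(c_{i,j,\ell}^{(q)}-\beta_{i,j,\ell}\big)$ up to an additive constant, so the BDP ranks stations for a \emph{fixed} type exactly as the original does — the $\beta$-shift cancels in every within-type station comparison — but it weights the admission of different types differently. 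The delicate step is to show this reweighting never reverses the original admission priority; I would verify that under Assumption~1, Assumption~2, and the standing non-triviality hypothesis $c_{i,j,\ell}^{(q)}\ge 0$, the per-capacity orderings induced by $c_{i,j,\ell}^{(q)}$ and by $c_{i,j,\ell}^{(q)}-\beta_{i,j,\ell}$ agree on every comparison the structural lemmas leave to the objective, since $R_i^\ell\ge v_i d_q$ together with monotonicity of $R_i^\ell/v_i$ rules out the only candidate inversions. The one point that genuinely needs care is tie handling: when the BDP is indifferent between two types for the last capacity slot the original problem need not be, so the argument must single out the priority-respecting greedy optimum among the BDP optima rather than an arbitrary LP vertex. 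Once this alignment is secured, global optimality of Algorithm~\ref{alg.gloabl} follows, because the BDP is an exactly solvable LP and its projection realizes the provably optimal greedy allocation.
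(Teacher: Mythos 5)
Your route is genuinely different from the paper's. The paper never characterizes the optimum explicitly as a greedy allocation; it runs a short interchange argument between the original problem $f$ and the BDP $g$: it lifts the optimum $A^{\star}$ of $f$ into the feasible set of $g$ by routing the rejected mass $\Lambda_{i,j,\ell}-\lambda^{\star}_{i,j,\ell}$ to $s_{Q+1}$, projects the LP optimum $\hat{B}^{\star}$ back into the feasible set of $f$, adds the two optimality inequalities to obtain $f(A^{\star})-g(\hat{A})\geq f(B)-g(\hat{B}^{\star})$ as in \eqref{ekhtelaf}, and concludes $f(B)=f(A^{\star})$ by asserting that strict inequality would contradict optimality. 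Your plan instead proves optimality of an explicit priority-ordered allocation via the exchange arguments behind Lemmas \ref{ahmad structure lemma} and \ref{station.order}, and then argues the BDP reproduces it. The merit of your route is that it drags the real difficulty into the open: the quantities the paper calls $\alpha$ and $\beta$ are type-weighted rejected masses $-\sum_{i,j,\ell}(\Lambda_{i,j,\ell}-\lambda_{i,j,\ell})\,c^{(Q+1)}_{i,j,\ell}$ with type-dependent weights, and the paper's claim that they must coincide is exactly the alignment question you isolate.

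That said, the step you defer --- that the shift by $\beta_{i,j,\ell}=c^{(Q+1)}_{i,j,\ell}$ ``never reverses the original admission priority'' --- is not a routine verification to be checked later; it is the entire content of the theorem, and as sketched it does not close. Concretely, the BDP's gain from moving type $(i,j,\ell)$ off $s_{Q+1}$ onto a real station $q$, per unit of capacity consumed, is $\bigl(v_i(d_{Q+1}-d_q)+e_j(\theta_{Q+1}-\theta_q)\bigr)/e_j$, from which $R_i^{\ell}$ has cancelled; the original problem's admit-versus-reject margin is $\bigl(R_i^{\ell}-v_id_q-e_j\theta_q\bigr)/e_j$, which retains $R_i^{\ell}$. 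For pairs ordered by Lemma \ref{ahmad structure lemma} (higher VoT \emph{and} lower energy) Assumption 2 plus the non-triviality hypothesis do align the two rankings, but Lemma \ref{ahmad structure lemma} only orders types within a fixed energy level or a fixed VoT level, so the cross-comparisons (higher VoT and higher energy) are precisely the ones left to the objective --- and there the two rankings can disagree: e.g.\ $(v,e,R)=(1,1,100)$ versus $(2,\,2.01,\,210)$ satisfies $R_1/v_1<R_2/v_2$, yet for the last unit of capacity at a close, cheap station the original objective prefers the second type (about $104.5$ versus $100$ welfare per unit capacity) while the BDP prefers the first (since $2/2.01<1$). So your claim that Assumptions 1--2 ``rule out the only candidate inversions'' is asserted rather than proved and appears false as stated; either additional hypotheses are needed, or the exact-agreement argument must be replaced by something that handles the $R_i^{\ell}$-dependence of the rejection option directly. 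Until that is done, the proposal establishes feasibility of Algorithm \ref{alg.gloabl}'s output and optimality of the BDP for its own objective, but not global optimality for \eqref{ahmad. raveshe jadide social1}.
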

 The proof is provided in the Appendix.
 
Next, we consider the case of designing IC pricing-routing policies for a profit-maximizing CNO.
\section{Profit-maximizing policy}\label{section.pm}
In the section, we study the design of incentive-compatible pricing-routing policies with the goal of maximizing the profit earned by the CNO. Consider the following problem:
\begin{align}
&\max_{\substack{\mathbf r_{i,j,\ell}\geq 0,\\0\leq \lambda_{i,j,\ell}\leq \Lambda_{i,j,\ell}\\ P_{i,j,\ell}}} \sum_{\ell = 1}^{B} \sum_{i = 1}^{V}\sum_{j=1}^{E}  \left[P_{i,j,\ell}\lambda_{i,j,\ell} - \boldsymbol{\theta}^T\mathbf r_{i,j,\ell}e_{j}\lambda_{i,j,\ell} \right].\nonumber\\ &  \mbox{s.t.} ~~~\forall i\in \mathcal V, \forall j \in \mathcal E, \ell\in \mathcal B, \forall m \in \mathcal B_{\ell}:\label{ahmad.profit1} \\
& \sum_{l = 1}^{B}\sum_{i = 1}^{V}\sum_{j=1}^{E} \lambda_{i,j,\ell} e_j r_{i,j,\ell}^{(q)} \leq C_q, ~\forall q \in \{1,\ldots,Q\}, \label{firstprofit.const}\\ & \mathbf{1}^T \mathbf r_{i,j,\ell} = 1, \\
& W_{V,j,\ell} \leq W_{V-1,j,\ell} \leq \ldots \leq W_{1,j,\ell} \leq \frac{R_1^{\ell}}{v_1} , \label{thirdconst}\\
&\sum_{t=1}^{i} (v_{t+1} - v_{t}) (W_{t,j,\ell} - W_{t,j,m}) \leq R_{1}^{m} - R_{1}^{\ell},  \label{new.profit.cons}\\ &\mbox{IC and IR Constraints \eqref{IC1}-\eqref{IC4} and \eqref{ir.const}}.  \nonumber
\end{align}

The CNO's profit is not affected by the average wait times users experience. Instead, the objective function simply considers the revenue from services sold minus the electricity costs. The first and second constraints ensure that station capacity constraints are not violated and routing probabilities sum up to 1. The third (e.g., \ref{thirdconst}) and fourth (e.g., \ref{new.profit.cons}) constraints ensure that the wait times that result from the choice of $\lambda_{i,j,\ell}$ and $\mathbf r_{i,j,\ell}$ do not violate the requirements imposed on wait times in an IC pricing-routing policy. Note that the connection between the prices $P_{i,j,\ell}$ and the admission rate and routing probabilities $\boldsymbol{\lambda}$ and $\boldsymbol{R}$ are only through the IR and IC constraints. Accordingly, for a given set of feasible values of $\boldsymbol{\lambda}$ and $\boldsymbol{R}$, and hence $W_{i,j,\ell}(\boldsymbol{\lambda},\boldsymbol{R})$, one may maximize the prices independently to maximize revenue, as long as IR and IC constraints are not violated.  Consider the following prices:\begin{align}
&\forall j \in \{1,\ldots,E-1\}, \forall i \in \{1,\ldots, V-1\}, \forall \ell \in \{1,\ldots, B\}: \nonumber\\
&P_{i+1,j,\ell} = P_{i,j,\ell} + v_{i+1} W_{i,j,\ell} - v_{i+1} W_{i+1,j,\ell},\label{ahmadic1}\\
&P_{i,j+1,\ell} = P_{i,j,\ell} + v_{i} W_{i,j,\ell} - v_{i} W_{i,j+1,\ell},\label{ahmadic2}\\
&P_{1,1,\ell} =  R_1^{\ell} - v_1 W_{1,1,\ell}.\label{ahmadic3} 
\end{align}
The reader can verify that these prices are as high at horizontal IC constraints allow them to be, and hence, if they are valid, they will be revenue-maximizing. Next, we show that this is indeed the case, i.e., the prices are IC.
 \begin{proposition} \label{pm.ic}
The prices defined in \eqref{ahmadic1}-\eqref{ahmadic3} are Incentive Compatible and Individually Rational. 
\end{proposition}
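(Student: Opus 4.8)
The plan is to check the four families of IC inequalities \eqref{IC1}--\eqref{IC4} and the IR conditions \eqref{ir.const} one at a time, using Lemma~\ref{localic lemma} to restrict attention to neighbouring service options and then substituting the recursive prices \eqref{ahmadic1}--\eqref{ahmadic3} directly. Two of the four families are immediate: plugging \eqref{ahmadic2} into the local energy constraint shows that $P_{i,j+1,\ell}+v_iW_{i,j+1,\ell}=P_{i,j,\ell}+v_iW_{i,j,\ell}$, so the horizontal IC holds with equality, and plugging \eqref{ahmadic1} into the upward local VoT constraint likewise makes $P_{i+1,j,\ell}+v_{i+1}W_{i+1,j,\ell}=P_{i,j,\ell}+v_{i+1}W_{i,j,\ell}$ bind. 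For the remaining downward VoT constraint, substituting \eqref{ahmadic1} collapses it to $(v_{i+1}-v_i)\,(W_{i,j,\ell}-W_{i+1,j,\ell})\ge 0$, which is exactly the wait-time monotonicity enforced by \eqref{thirdconst} (equivalently Lemma~\ref{icwaitinglemma}) together with $v_{i+1}>v_i$.

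The travelling-preference constraint \eqref{IC4} is the one case with no recursion linking the two options, so here I would telescope the price definition. Iterating \eqref{ahmadic1} from the anchor \eqref{ahmadic3}, and using that \eqref{ahmadic2} keeps $P_{1,j,\ell}+v_1W_{1,j,\ell}=R_1^\ell$ across energy levels, gives the closed form $P_{i,j,\ell}+v_iW_{i,j,\ell}=R_1^\ell+\sum_{t=1}^{i-1}(v_{t+1}-v_t)W_{t,j,\ell}$. Substituting this form for both $\ell$ and a neighbouring preference $m$ into \eqref{IC4} cancels the telescoped structure and leaves precisely $\sum_{t}(v_{t+1}-v_t)(W_{t,j,\ell}-W_{t,j,m})\le R_1^m-R_1^\ell$, i.e.\ constraint \eqref{new.profit.cons}, which is feasible by assumption. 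The same closed form settles IR: the bound $W_{t,j,\ell}\le R_1^\ell/v_1$ from \eqref{thirdconst} yields $P_{i,j,\ell}+v_iW_{i,j,\ell}\le (R_1^\ell/v_1)v_i$, and Assumption~2 ($R_1^\ell/v_1\le R_i^\ell/v_i$) then gives $P_{i,j,\ell}+v_iW_{i,j,\ell}\le R_i^\ell$, so every participating type has non-negative utility, with equality for the lowest VoT, consistent with \eqref{ir.const}.

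The step I expect to be the main obstacle is the mutual consistency of the two recursions \eqref{ahmadic1} and \eqref{ahmadic2}, which are both imposed at all interior indices. Computing $P_{i+1,j+1,\ell}$ along the two orders (vertical-then-horizontal versus horizontal-then-vertical) shows they agree only if the wait times satisfy an integrability-type condition relating $W_{\cdot,j,\ell}$ and $W_{\cdot,j+1,\ell}$, and this condition is not among the explicitly listed constraints \eqref{thirdconst}--\eqref{new.profit.cons}. I would therefore either fix a single construction path (say, vertical along $j=1$, then horizontal) and separately verify that the remaining local IC inequalities hold at every off-anchor index---which requires a wait-time monotonicity in energy demand that I would draw from the routing structure of Lemma~\ref{ahmad structure lemma}---or argue that the optimal wait times produced by \eqref{ahmad.profit1} automatically satisfy the compatibility condition. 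Handling this interplay cleanly, rather than the routine substitutions above, is where the real work lies.
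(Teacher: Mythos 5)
Your verification of the vertical, horizontal, and travelling-preference IC conditions and of IR is essentially the paper's own proof: the paper likewise substitutes \eqref{ahmadic1}--\eqref{ahmadic2} into the local conditions of Lemma \ref{localic lemma}, invokes \eqref{thirdconst} for the downward vertical constraint, reduces \eqref{IC4} to \eqref{new.profit.cons}, and derives IR from $W_{i,j,\ell}\le R_1^{\ell}/v_1$ together with Assumption~2 (the paper runs an induction on $i$ rather than using your telescoped closed form $P_{i,j,\ell}+v_iW_{i,j,\ell}=R_1^{\ell}+\sum_{t<i}(v_{t+1}-v_t)W_{t,j,\ell}$, but the two computations are identical in substance). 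The one point where you go beyond the paper is the overdetermination of the price system, and your concern is legitimate: computing $P_{i+1,j+1,\ell}$ along the two lattice paths gives a discrepancy of $(v_{i+1}-v_i)\bigl(W_{i,j,\ell}-W_{i,j+1,\ell}\bigr)$, so \eqref{ahmadic1} and \eqref{ahmadic2} are jointly consistent only if wait times do not depend on the energy index. The paper's proof is silent on this, but the rewritten objective \eqref{ahmad new profit1} reveals the intended resolution --- the cross terms there involve $W_{i,1,\ell}$, i.e.\ the vertical recursion is anchored along $j=1$ and the horizontal recursion then extends to general $j$, which is exactly the first of the two fixes you propose. Under that convention the off-anchor local vertical IC at $j>1$ reduces to $W_{i+1,j,\ell}\le W_{i,1,\ell}\le W_{i,j,\ell}$, and the right-hand inequality (wait time nondecreasing in energy demand) is not among the explicit constraints \eqref{thirdconst}--\eqref{new.profit.cons}; it does hold in the hard-capacity regime by the routing structure of Lemma \ref{ahmad structure lemma}, as you suggest, but strictly speaking it should be added as a constraint (or verified at the optimum) for the general wait-time model. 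So your proposal is correct, matches the paper where the paper is explicit, and correctly identifies --- and repairs --- a gap the paper leaves implicit.
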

 The proof is provided in the Appendix.
 
Accordingly, to find the optimal pricing-routing policy, we can simply  substitute  the prices from \eqref{ahmadic1}-\eqref{ahmadic3} in \eqref{ahmad.profit1}, allowing us to  rewrite the problem with fewer decision variables and constraints:
\begin{align}
\max_{\substack{\mathbf r_{i,j,\ell}\\0\leq \lambda_{i,j,\ell}\leq \Lambda_{i,j,\ell}}}&\sum_{l = 1}^{B}\sum_{j = 1}^{E}  \bigg[ \sum_{i=1}^{V} \bigg( R_{1}^{\ell}\lambda_{i,j,\ell} - v_{i}W_{i,j,\ell}(\boldsymbol{\lambda},\boldsymbol{R})\lambda_{i,j,\ell}  -
 \nonumber\\&~~~~~~~~~~~~ \boldsymbol{\theta}^T\mathbf r_{i,j,\ell}e_{j}\lambda_{i,j,\ell}\bigg) - \nonumber \\&~ \sum_{i=1}^{V-1} \bigg((v_{i} - v_{i+1})(\sum_{m =i+1}^{V}\lambda_{m,j})W_{i,1,\ell}(\boldsymbol{\lambda},\boldsymbol{R})\bigg) \bigg].\nonumber\\ \mbox{s.t.} ~~~~~~&  \mbox{Constraints \eqref{firstprofit.const} - \eqref{new.profit.cons}}.\label{ahmad new profit1}
\end{align}
 
 The  profit maximization problem \eqref{ahmad new profit1} has a similar structure to that of \eqref{SO.route}, which we know it is non-convex in general. However, we can still uniquely characterize the globally optimal solution in the special case of hard capacity constraints on charging stations, which is especially helpful in our numerical experiments.

 \subsection{The Special Case of Hard Capacity Constraints}
In the special case of hard capacity constraints, where  \eqref{ahmad new profit1} can be rewritten as: 

\begin{align}
\max_{\substack{\mathbf r_{i,j}\geq 0\\0\leq \lambda_{i,j}\leq \Lambda_{i,j}}}  \sum_{{ q}=1}^{Q-1}\sum_{l=1}^{B}\sum_{i=1}^{V} \sum_{j=1}^{E} \bigg[ R_1^{\ell}\lambda_{i,j,\ell} -  \bigg(\lambda_{i,j,\ell}\big[v_i(d_{ q} - d_Q) \nonumber\\ +   e_j(\theta_{ q} - \theta_Q)\big]r_{i,j,\ell}^{({ q})} +  v_i d_Q + e_j \theta_Q\bigg)~&\nonumber\\   - \left((v_i - v_{i+1})(d_{ q} - d_Q)\left(\sum_{m =i+1}^{V}\lambda_{m,j,\ell}\right)r_{i,1,\ell}^{({ q})}\right) \bigg].&\label{ahmad profite nahaie shode}
\end{align}
We can show that \eqref{ahmad profite nahaie shode} can be similarly solved through BDP linear programs. We remove the details for brevity.

\begin{table}[h]\label{tab:chargingstation}

\centering
\begin{tabular}{ c c c } 
\hline
 Line & L31 & L43\\
 \hline
 limit (kWh) &  7000 & 1400 \\ 
 \hline

\end{tabular}  \caption{Line loading limit}
\end{table}

\begin{table}[h]\label{tab:chargingstation}

\centering
\begin{tabular}{ c|c } 
\hline
 Time travel distance (hour) & Capacity (MWh)\\
 \hline 
 
$d_1 = 0.03 $ &  $c_1 = 0.6$ \\ 
 \hline
$d_2 = 0.06 $  & $c_2 = 0.7$\\ 
 \hline
$d_3 = 0.09$  &   $c_3 = 0.8$ \\ 
 \hline
 $d_4 = 0.12$  & $c_4 = 0.6 $ \\ 
 \hline
  $d_5 = 0.15$    & $c_5 = 0.8 $ \\ 
 \hline
  $d_6 = 0.18$   & $c_6 = 1 $ \\ 
 \hline
\end{tabular}  \caption{Charging stations' values}
\end{table}

 \section{Numerical Results}\label{section.sim}

\subsection{Grid Structure}
{ 
To study the effect of distribution system constraints on the pricing/routing solutions, we use bus 4 of the Roy Billinton Test System (RBTS) \cite{76730}.   Fig. \ref{fig:bus4} shows the single line diagram of Bus 4 distribution networks. Line limit details are shown in Table I.  In the case study, we include $6$ charging stations with parameters shown in Table II. The first three stations are load points LP6, LP7 and LP15 in  bus $2$ of RBTS, and the rest of charging stations are in  bus $4$ of RBTS as shown in Fig. \ref{fig:bus4}.   We assume that each load point with a charging station also has a commercial conventional loading with an average  of 415  kW and a peak of 671.4  KW. 
Furthermore, for each bus, we use the locational marginal electricity prices data from \cite{margin.elec.pric}.  

\begin{figure}[h]
\centering
\includegraphics[width=0.83\linewidth]{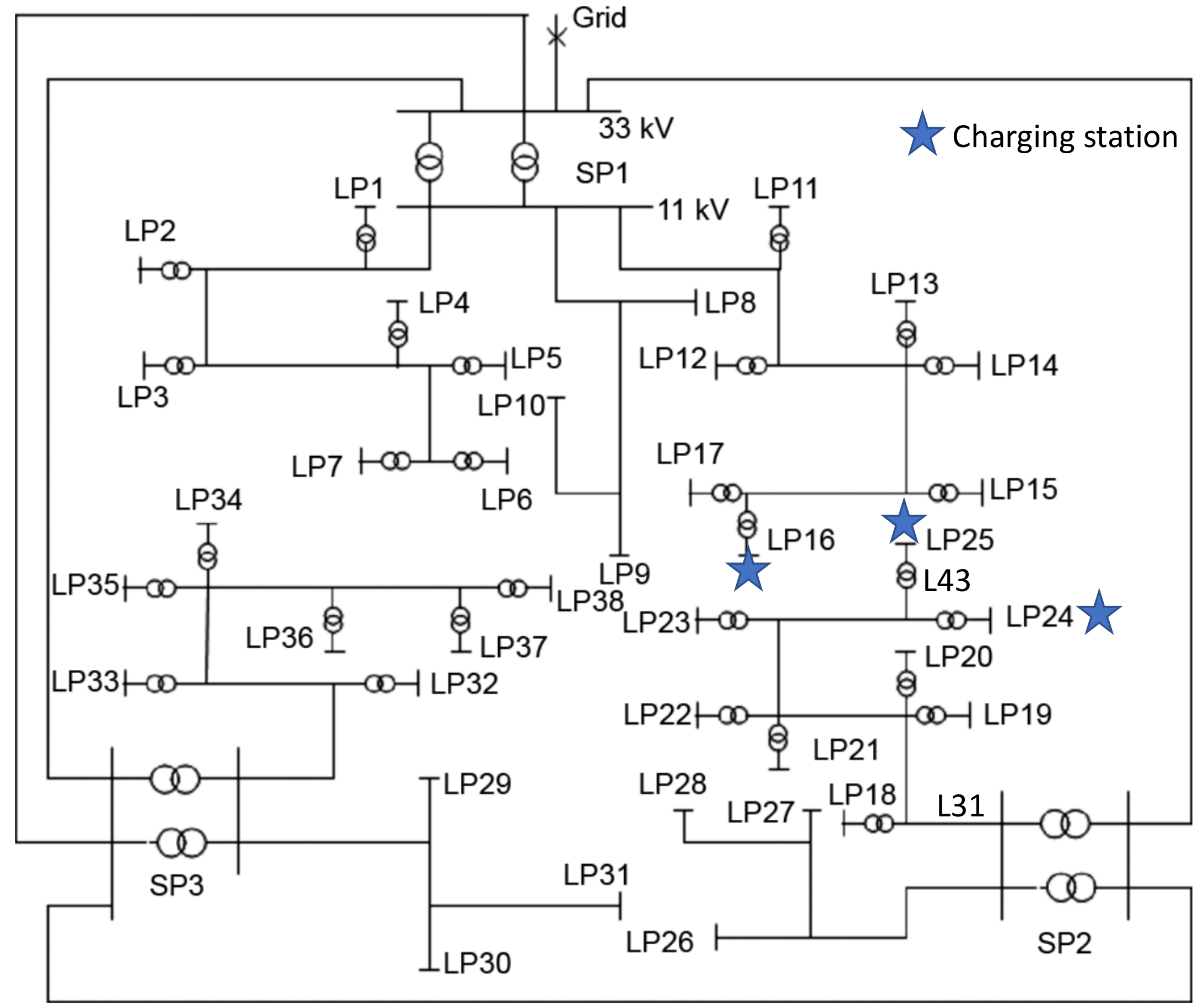}
  \caption{ {  Single line diagram of bus 4 distribution system of RBTS}} \vspace{-.2cm} \label{fig:bus4}
\end{figure}

\subsection{EV Arrival}

In our case study, we assume each customer belongs to one of $125$ user types considering $5$ different value of times, and $5$ different energy demand and $5$ different traveling preferences as it is shown in Table III. We note that the dimension of the type grid is not a major issue and it can be further expanded if needed. We consider 24 time slots with varying potential arrival rates for each day (note that at each time slot, we solve a static problem as we have assumed that the dynamics of charging, which takes around 20 minutes, is faster than the dynamics of the variations of arrival rates). We use the Danish driving pattern in \cite{5751581} to model  EVs arrival rates (see Fig. \ref{fig:numberev}). 

\begin{table}[h]\label{tab:peopletype}

\centering
\resizebox{\columnwidth}{!}{
\begin{tabular}{ c|c|c } 
 \hline
 Value of Time (\$/h) & Energy Demand (kWh) & Traveling Preferences\\
 \hline 
$v_1 = 20  $ &  $e_1 = 30 $   &  $b_1 = \{s_1,s_2\}$  \\ 
 \hline
$v_2 = 30  $ & $e_2 = 40 $ & $b_2 = \{s_3,s_4\}$ \\ 
 \hline
$v_3 = 40  $ & $e_3 = 50 $ & $b_3 = \{s_5,s_6\}$ \\ 
 \hline
 $v_4 = 50  $ & $e_4 = 60 $ & $b_4 = \{s_2,s_3\}$ \\ 
 \hline
 $v_5 = 60  $ & $e_5 = 70 $ & $b_5 = \{s_4,s_5\}$ \\ 
 \hline
\end{tabular}} \caption{{ Customers' types}}
\end{table}

\begin{figure}[h]
\centering
\includegraphics[width=0.9\linewidth]{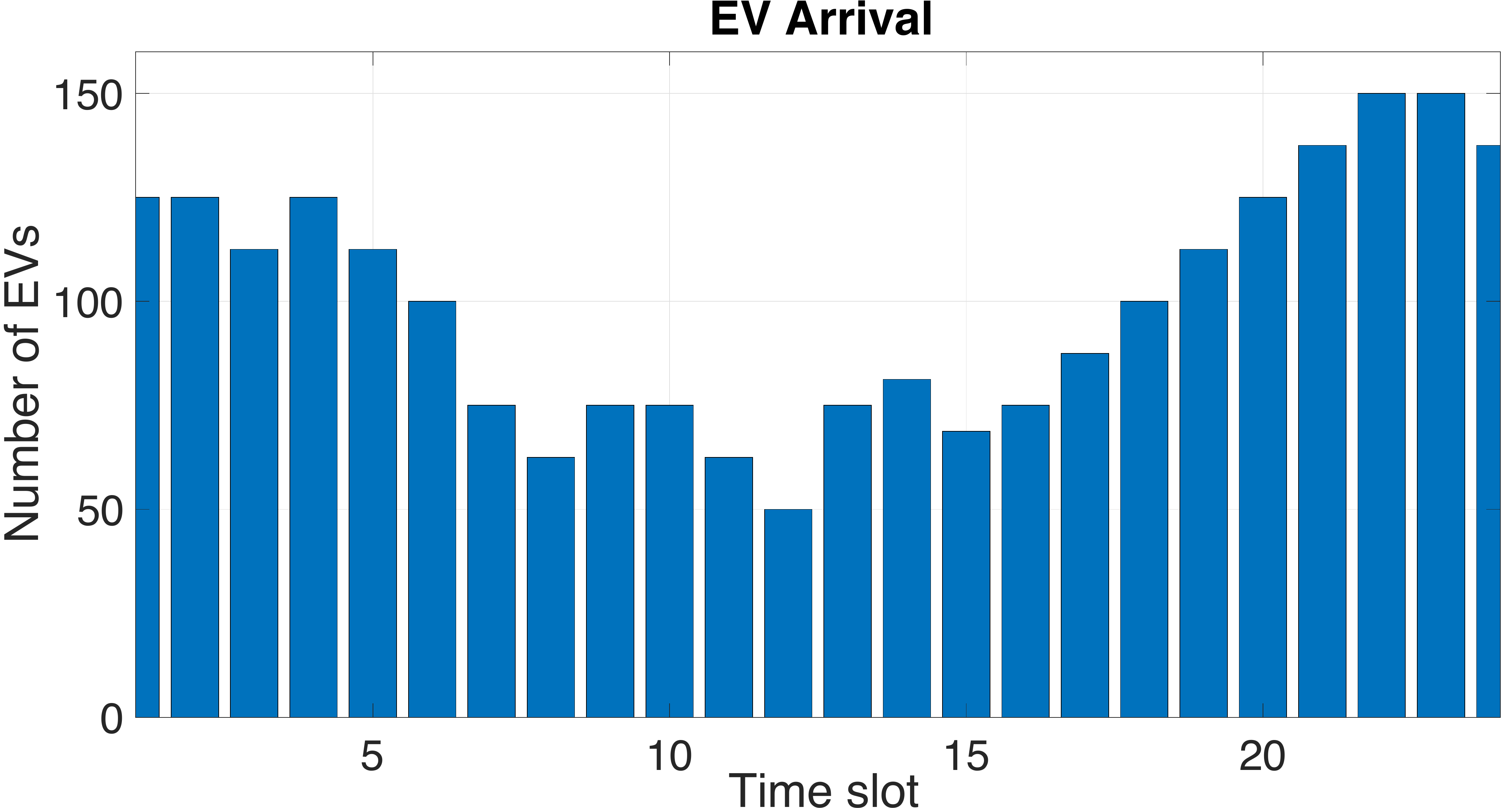}
  \caption{ { EVs arrival to the system at each time step.}} \vspace{-.2cm} \label{fig:numberev}
\end{figure}}

We focus specifically on the special case of stations with hard capacity constraints, where our proposed Algorithm 1 can determine the globally optimal pricing-routing policy. Then we study both socially optimal and profit maximizing scenarios.
We highlight the results of our algorithm by considering both charging stations equipped with behind-the-meter solar generation and without any solar generation.  

{\subsection{ Experiment Results}

In a socially optimal scenario, it can be seen from Fig. \ref{fig:limit}  that line loadings reach but not exceed the limit at hours $14$, $23$ and $24$, which means the distribution network constraints are active for station $6$. Hence, the CNO can design an incentive compatible pricing and routing scheme while considering the impact of EV charging in the power distribution system (in Fig. \ref{fig:limit}, it is shown that in the absence of distribution system constraints, the optimal pricing/routing strategy would violate network constraints).}

\begin{figure}[h]
\centering
\includegraphics[width=1\linewidth]{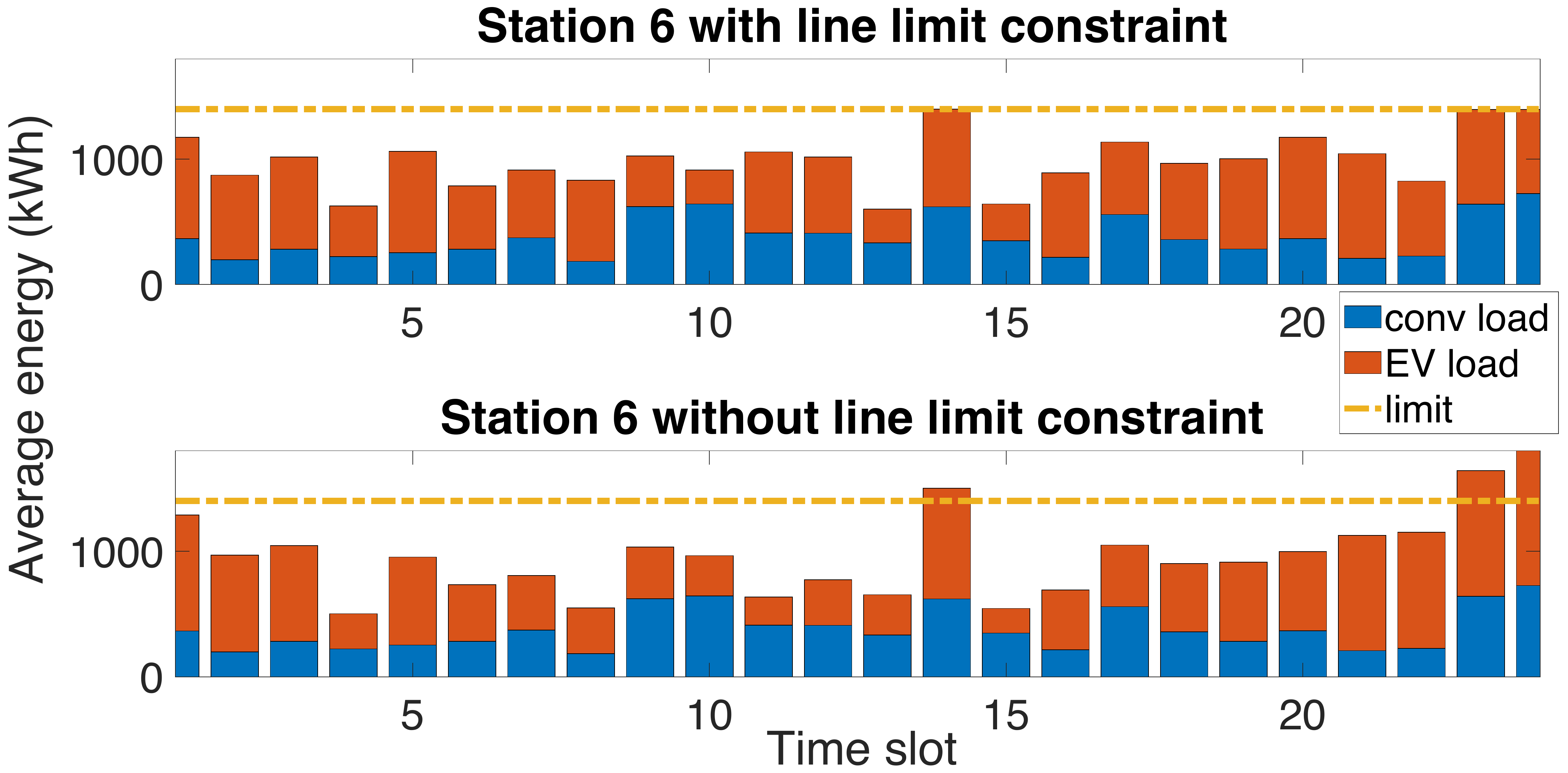}
  \caption{ { Line loading of the socially optimal problem for station $6$.}} \vspace{-.2cm} \label{fig:limit}
\end{figure}

{
Now, let us assume that charging station $6$, which is the farthest charging station from customers routes (i.e., the least desirable assignment for them in terms of traveling distance), can potentially be equipped with a  behind-the-meter  large-scale ($500$kW)   solar system (this will require  $1500$m$^2$ of roof space to install). For the random generation profiles, we use solar data from \cite{solar.eng.jun.2019} for June 2019 (one realization shown in Fig. \ref{fig:social}). 

\begin{figure}
\centering
\includegraphics[width=1\linewidth]{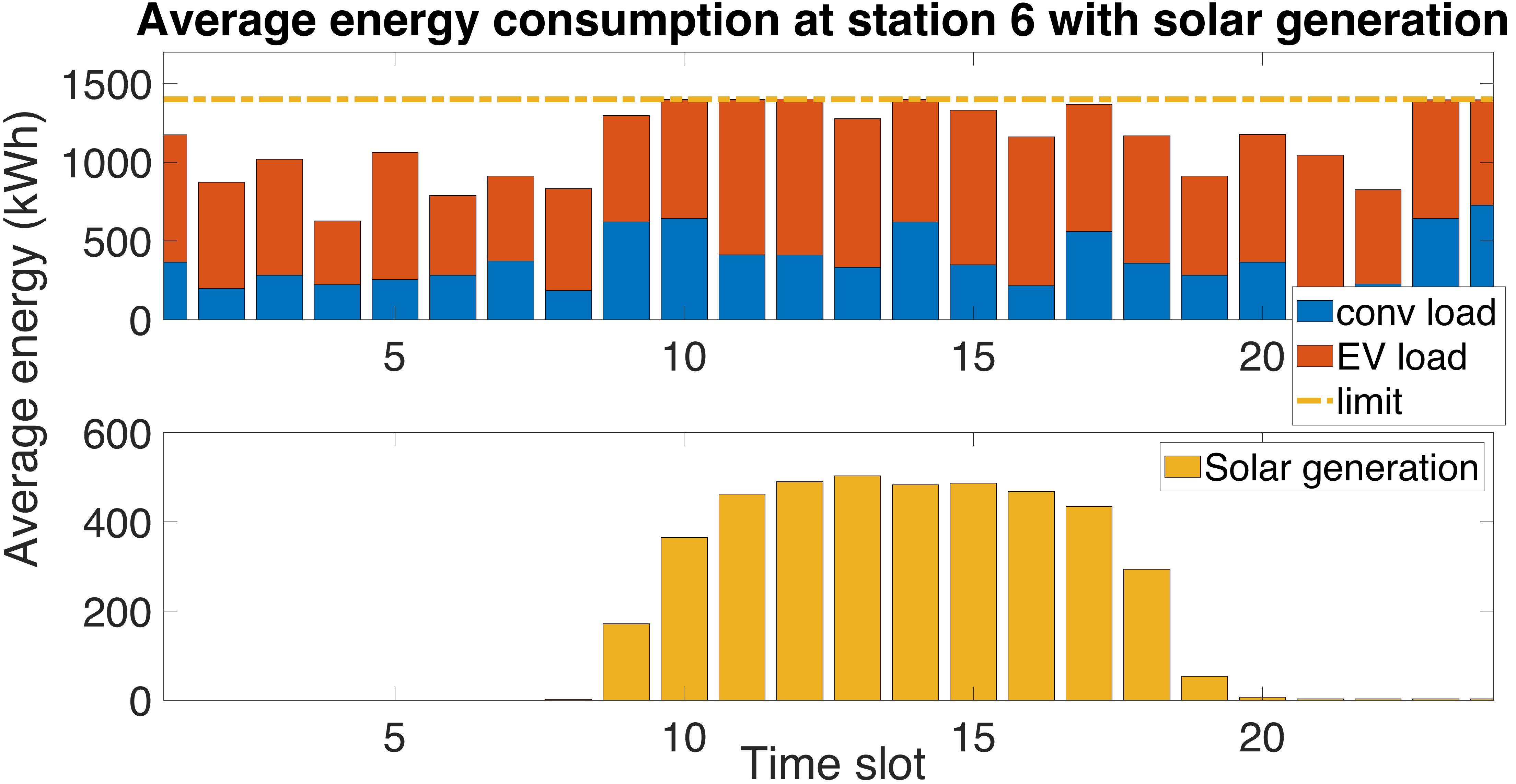}
  \caption{ { Energy demand for charging station 6 with behind-the-meter solar generation capacity.}} \vspace{-.2cm} \label{fig:social}
\end{figure}

The first result we highlight is the energy consumption profile of station $6$ under the social-welfare maximizing scenario with available solar capacity. Essentially, by comparing energy demand with no solar generation, i.e., Fig. \ref{fig:limit} and with solar generation, i.e., Fig. \ref{fig:social}, we see that the availability of free solar energy makes the farthest charging station have higher levels of demand in order to maximize welfare, and so customers have to drive further on average. We will highlight this trade-off more thoroughly next.

 Specifically, Table IV shows the cost of traveling  from the main corridor to reach charging stations over all types of customers with vehicle arrivals shown in Fig. \ref{fig:numberev}. We calculate $\sum_{l = 1}^{B}\sum_{i = 1}^{V}\sum_{j=1}^{E} v_i \lambda_{i,j,\ell}W_{i,j,\ell}$ as the cost of traveling  in both socially optimal and profit maximizing scenarios over a day. Without solar generation, for  both the cases in which the objective is to maximize social welfare and to maximize profits,  customers with a higher VoT and lower energy demand have priority in joining the closer charging stations. With solar generation, in the socially optimal case, customers with higher energy demand are assigned to the furthest charging station even to get cheaper electricity, and the traveling cost is larger. However, for the profit maximizing case, customers with a higher value of time (and hence higher willingness to pay) are still assigned to the closer charging stations (and are charged more), and the overall cost of traveling is less than when the objective is to maximize social welfare, and larger than not having solar generation.
 
 \begin{table}[h]\label{tab:convload}

\centering
\resizebox{\columnwidth}{!}{
\begin{tabular}{ c|c|c } 
 \hline
  &  Socially optimal & Profit maximizing\\
%  \hline 
With solar generation  & 9460 (\$) & 9320 (\$)  \\ 
%  \hline
Without solar generation  & 8280 (\$) & 8440 (\$) \\ 
 \hline
\end{tabular}} \caption{{  Cost of traveling of all customers over a day  }}
\end{table}
We would like to note that the concept of incentive-compatibility as highlighted in our paper only applies to each individual’s incentive for incorrectly reporting their type to the CNO under the differentiated service program. The algorithm provides no guarantee that every individual is better off under the differentiated SO policy than they would be under a Nash Equilibrium with no centralized routing, hence incentivizing them to request the existence of the differentiated service program. This is considered normal since any type of congestion pricing mechanism (including locational marginal pricing) to maximize welfare could lead to cost increases for some individuals but overall improve welfare for the society.
}

{ 
\subsection{Bench-marking with status-quo}
The goal of this experiment is to highlight the benefits of a mobility-aware differentiated service mechanism as opposed to self-routing by customers to stations, which can be  considered the status-quo. We have compared the performance of our proposed solution to the equilibrium load and wait time pattern at the stations in the scenario where users self-route. We  assume that in the self-routing scenario, customers will be charged at locational marginal prices for energy (which can vary across stations). For the experiment, we assume 3 different user types, and 3 charging stations (this is clearly not a realistic choice of the parameters, but computing all the equilibria is computationally challenging in bigger cases). The values we used for the numerical experiment are shown in Table V:

 \begin{table}[h]\label{tab:convload}

\centering
\resizebox{\columnwidth}{!}{
\begin{tabular}{ |c|c|c|c| } 
 \hline
  Energy demand (kWh) &  50  & 60 & 40\\
  \hline 
Value of time (\$/h)  & 20 & 30 & 40  \\ 
  \hline
Reward (\$)  & 440 & 635 & 845 \\ 
 \hline
 Locational marginal price (\$/kWh)  & 0.5 & 0.4 & 0.3 \\ 
 \hline
  Time travel distance (h)  & 0.3 & 0.6 & 0.9\\ 
  \hline
\end{tabular}} \caption{{  Parameters }}
\end{table}

Then, we let the customers to selfishly choose the charging station they want to charge at in order to maximize their utility. We need to note that multiple Nash equilibria may exist for this game.{ In our setup, there exist $4$ different equilibria, and the values of social welfare are $ 7290.9 \$, 7302.1 \$, 7312.1 \$, 7328.1\$ $. Observe that they are all less than the value of social welfare achieved using our proposed solution based on differentiated services, which is $7398.9 \$$}. We can argue that this is a natural observation given the lack of appropriate congestion pricing schemes that can deter users from the most popular choice of stations. We note that congestion pricing to guide users towards a socially-optimal charge footprint while considering station capacities is not straightforward to apply in this case for reasons explained in the Introduction. 
%  \begin{table}[h]
% \color{blue}
% \centering
% \resizebox{\columnwidth}{!}{
% \begin{tabular}{ c|c } 
%  \hline
%   Value of social welfare at the equilibrium (\$) &  Value of social welfare at the equilibrium (\$)\\
%  \hline 
%  $ 7290.9, 7302.1, 7312.1, 7328.1$ & $7398.9$\\ 
%   \hline
% \end{tabular}} \caption{{ Social welfare value }}
% \end{table}

}
\section{{ Conclusions and future work}}
We studied the decision problem of a CNO for managing EVs in a public charging station network through differentiated services.  In this case, EV users cannot directly choose which charging station they will charge at. Instead, they choose their energy demand and their priority level, as well as their traveling preferences (which stations they are willing to visit) from among a menu of service options that is offered to them, and the CNO then assigns them to the charging stations directly to control station wait times and electricity costs. This is reminiscent of incentive-based direct load control algorithms that are very popular in demand response. We propose  incentive compatible pricing and routing policies for maximizing the social welfare or the profit of the CNO. We proposed an algorithm that finds the globally optimal solution for the non-convex optimizations that appear in our paper in the special case of hard capacity constraints in both social welfare and profit maximization scenarios and highlighted the benefits of our algorithms towards behind-the-meter solar integration at the station level. {  For future work, we can consider the heterogeneity of customers in assigning different values to different charging stations that have to do with more than just the travel distance to the station and the waiting time in the queue. For example, users might be interesting in accessing some of available shopping options and amenities at particular stations while their vehicle is being charged.
}

\bibliographystyle{IEEEtran}
\bibliography{biblio.bib}

\begin{IEEEbiography}[{\includegraphics[width=1in,height=1.21in,clip,keepaspectratio]{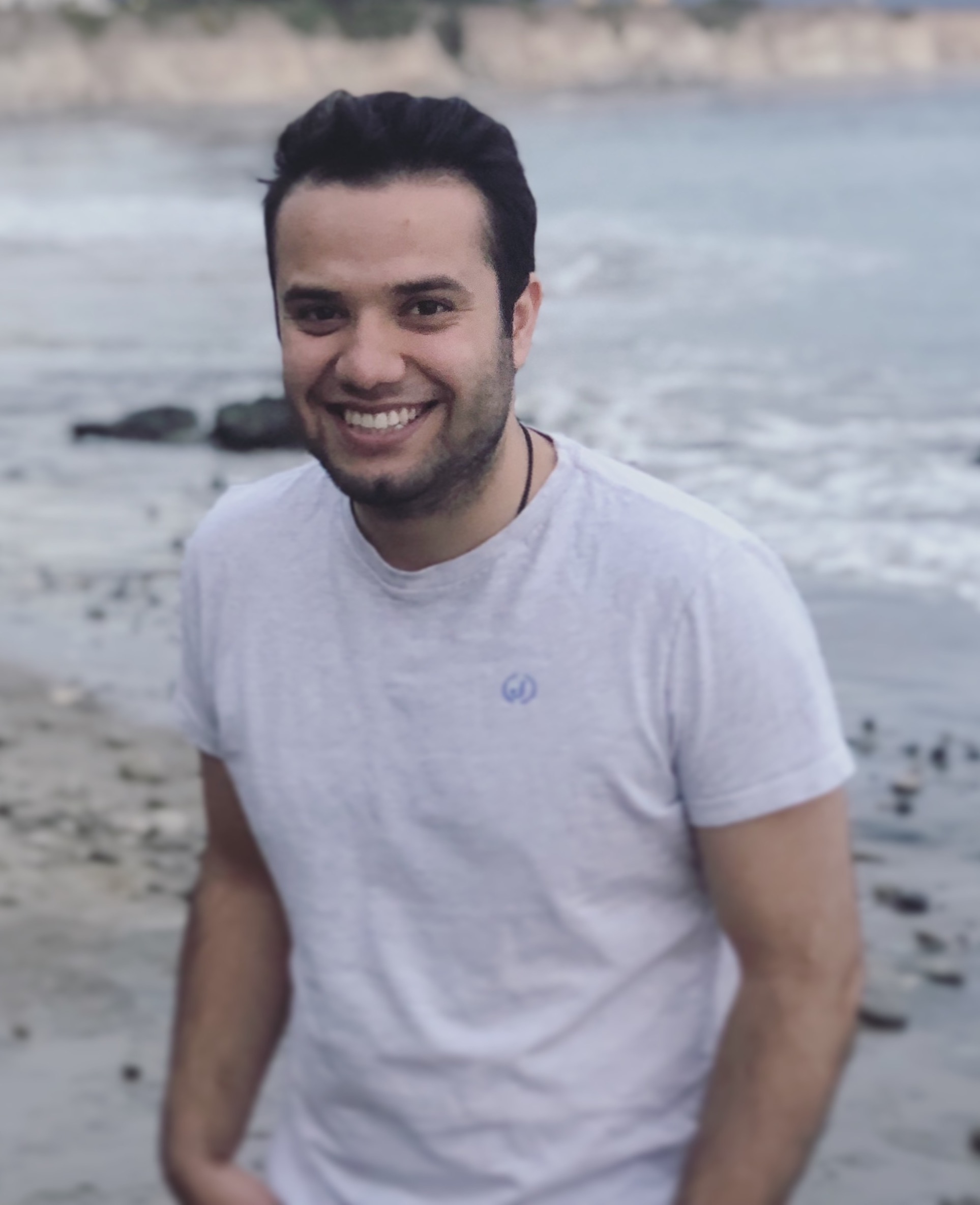}}]{AHMADREZA  MORADIPARI}
  is pursuing the Ph.D. degree at the University of California Santa Barbara. He received the B.Sc. degree in Electrical Engineering from Sharif University of Technology in 2017. His research is mainly focused on optimization and control algorithms to manage congestion and reduce electricity costs in electric transportation systems.
\end{IEEEbiography}
\begin{IEEEbiography}[{\includegraphics[width=1in,height=1.21in,clip,keepaspectratio]{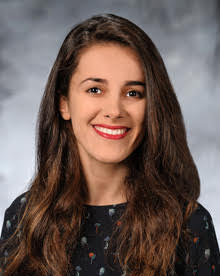}}]{MAHNOOSH ALIZADEH}is an assistant professor of Electrical and Computer Engineering at the University of California Santa Barbara. Dr. Alizadeh received the B.Sc. degree in Electrical Engineering  from Sharif University of Technology in 2009 and the M.Sc. and Ph.D. degrees from the University of California Davis in 2013 and 2014 respectively, both in Electrical and Computer Engineering. From 2014 to 2016, she was a postdoctoral scholar at Stanford University. Her research interests are focused on designing scalable control and market mechanisms for enabling sustainability and resiliency in societal infrastructures, with a particular focus on demand response and electric transportation systems. Dr. Alizadeh is a recipient of the NSF CAREER award.
\end{IEEEbiography}

 \newpage
\appendix

% \subsection{Proof of Lemma III.2}
% \noindent We prove both statements by contradiction. Consider the first statement. Suppose there is another optimal solution in which for the customers of type $(n,j,\ell)$ there is a positive probability $r^{(m)}_{n,j,\ell}$ of assignment to station $s_m$ while customers with type $(i,j,\ell)$ have been assigned to a less desirable station $s_k$ with $k>m$. However, we can have another set of routing probabilities such that $r_{n,j,\ell}^{(m)'} = \big(r^{(m)}_{n,j,\ell}-\eps* \lambda_{i,j,\ell}/\lambda_{n,j,\ell}\big)$, $r_{i,j,\ell}^{(m)'} = \eps* \lambda_{i,j,\ell}/\lambda_{n,j,\ell}$, and $r_{i,j,\ell}^{(k)'} = \big(r^{(k)}_{i,j,\ell}- \eps* \lambda_{i,j,\ell}/\lambda_{n,j,\ell}\big)$, which lead to another feasible solution that increases the objective function of \eqref{ahmad. raveshe jadide social1}. Therefore, it is contradictory to the assumption of optimality of the first solution.  The proof of the second statement is similar, and we remove it for brevity.

\subsection{Proof of Lemma III.3}
\noindent
 We prove this by contradiction. Suppose there is another optimal solution for problem \eqref{ahmad. raveshe jadide social1} ($\mathbf{R}^{\star}, \boldsymbol{\lambda}^{\star}$) in which for customers with traveling preference $\mathcal{G}_{\ell}$, station $s_n$ has empty capacity, and assume that customers with type $(i,j,\ell)$ are assigned to stations $s_m$ with $m>n$. However, we can have another set of routing probabilities such that for a small $0\leq \epsilon$, $r_{i,j,\ell}^{(n)^{'}} = r_{i,j,\ell}^{(n)^{\star}} + \epsilon $, and $r_{i,j,\ell}^{(m)^{'}} = r_{i,j,\ell}^{(m)^{\star}} - \epsilon$ which is a feasible solution, and it will increase the objective function \eqref{ahmad. raveshe jadide social1} due to the structure we found in lemma \ref{ahmad structure lemma}. Hence, it is contradictory to the optimality of this solution.

 \subsection{Proof of Theorem \ref{algo.theorem}}
 
% First, we state a lemma. 
% \begin{lemma}
% In the optimal solution of problem \eqref{ahmad. raveshe jadide social1}, only charging stations in the set \begin{align}
%     \mathcal{X} = \{s_i : v_{V} (d_i - d_Q) + e_1 (\theta_i - \theta_Q) \leq 0 \} \nonumber
% \end{align} are used for admitting customers. 
% \end{lemma}
% \begin{proof}
% We prove this by contradiction. Suppose in the optimal solution of \eqref{ahmad. raveshe jadide social1}, charging station $s_k$ is used for admitting customers that is not in the set $\mathcal{X}$. Therefore, we will have \begin{align}
%     v_{i} (d_k - d_Q) + e_j (\theta_k - \theta_Q) > 0, \forall (i,j,\ell).
% \end{align} We can increase the objective function of problem \eqref{ahmad. raveshe jadide social1} by removing both this charging station and customers that are assigned to it, and increase the objective function of \eqref{ahmad. raveshe jadide social1} which is contradictory to the optimality of the solution. Therefore, all the  used charging stations in the optimal solution of  \eqref{ahmad. raveshe jadide social1} are in the set $\mathcal{X}$.
% \end{proof}

We first assume that all the charging stations are used in full capacity, i.e., potential   customers are more than the available capacity of charging stations.  We need to show that the Algorithm \ref{alg.gloabl}  will find the optimal solution of problem \eqref{ahmad. raveshe jadide social1}. For convenience, denote as $f(.)$  the objective function of \eqref{ahmad. raveshe jadide social1}, and $g(.)$ as the resulting linear program of problem \eqref{ahmad. raveshe jadide social1} when we consider virtual station $s_{Q+1}$ and we fix $\lambda_{i,j,\ell} = \Lambda_{i,j,\ell}$, $\forall (i,j,\ell)$. Assume the optimal solution of   problem $f$ to be  $A^\star = \bigg( [r_{i,j,\ell}^{(k)^{\star}}]_{k=1,\dots,\rho}, \lambda_{i,j,\ell}^{\star} \bigg)$, $\forall (i,j,\ell)$, and the optimal solution of linear program $g$ to be $\hat{B}^{{\star}} = \bigg([h_{i,j,\ell}^{(k)^{\star}}]_{k=1,\dots,\rho,Q+1}, \lambda_{i,j,\ell} = \Lambda_{i,j,\ell} \bigg)$, $\forall (i,j,\ell)$ . We define $\hat{A} = \bigg([r_{i,j,\ell}^{(k)^{'}}]_{k=1,\dots,\rho,Q+1}, \lambda_{i,j,\ell}^{'} = \Lambda_{i,j,\ell} \bigg)$, $\forall (i,j,\ell)$, such that: \begin{align}
    [r_{i,j,\ell}^{(k)^{'}}]_{k=1,\dots,\rho} &= [r_{i,j,\ell}^{(k)^{\star}}]_{k=1,\dots,\rho},\\
    r_{i,j,\ell}^{(Q+1)^{'}} &= \frac{\Lambda_{i,j,\ell} - \lambda_{i,j,\ell}^{\star}}{\Lambda_{i,j,\ell}},
\end{align}
and we define $B = \bigg([h_{i,j,\ell}^{(k)^{''}}]_{k=1,\dots,\rho}, \lambda_{i,j,\ell}^{''} \bigg)$, $\forall (i,j,\ell)$, such that:\begin{align}
    [h_{i,j,\ell}^{(k)^{''}}]_{k=1,\dots,\rho} &=  [h_{i,j,\ell}^{(k)^{\star}}]_{k=1,\dots,\rho},\\
    \lambda_{i,j,\ell}^{''}  &= \Lambda_{i,j,\ell} (1- h_{i,j,\ell} ^{(Q+1)^{\star}}).
\end{align}
Therefore, $\hat{A}$ and $B$ are in the feasible set of solutions of problems $g$ and $f$, respectively.
\begin{table}[h!]
\centering
\begin{tabular}{ c|c } 
 
$f$ & $g$\\
 \hline 
 $A^{\star}$ & $\hat{A}$\\
 
 $B$ & $\hat{B}^{\star}$

\end{tabular} 
\end{table}
By the definition of optimality, we can write:\begin{align}
    f(A^{\star}) + g(\hat{B}^{\star }) &\geq g(\hat{A}) + f(B), \text{or} \\
   \alpha = f(A^{\star}) - g(\hat{A}) &\geq   f(B) - g(\hat{B}^{\star }) = \beta, \label{ekhtelaf}
\end{align}
where $\alpha$ is the negative effect of admitting all customers to the system and adding the virtual station $s_{Q+1}$ on the problem \ref{ahmad. raveshe jadide social1} for optimal solution $A^{\star}$, and $\beta$ is that of solution $B$. Hence, $\alpha > \beta$ is contradictory to the optimality of $A^{\star}$. Therefore, $\alpha = \beta$, which means  the solution structure $B$ is the optimal solution for problem \eqref{ahmad. raveshe jadide social1}. Therefore, Algorithm 1 will propose the optimal solution of \eqref{ahmad. raveshe jadide social1}. Now consider the case where all   charging stations are not used in full capacity, i.e., the  potential customers are less than available capacity of charging stations. As is it shown in lemma \ref{station.order}, in the optimal solution of problem \eqref{ahmad. raveshe jadide social1}, customers will be assigned to the charging stations starting from charging station $s_1$. The same structure holds in the case where Algorithm \ref{alg.gloabl} adds a virtual station, since the coefficients of decision variables will have the same structure as they have in \eqref{ahmad. raveshe jadide social1}. Therefore, if the available capacity of charging stations is more than the potential set of customers can use, Algorithm \ref{alg.gloabl} will not send any customers to the station $s_{Q+1}$, and in the optimal solution of problem \eqref{ahmad. raveshe jadide social1} all customers will be admitted to the system in full.

\subsection{Proof of Proposition \ref{pm.ic}}

\noindent We know that
\begin{align}&P_{i+1,j,\ell} - P_{i,j,\ell} = v_{i+1}(W_{i,j,\ell} - W_{i+1,j,\ell}) \nonumber\\ &v_{i+1}(W_{i,j,\ell} - W_{i+1,j,\ell}) \geq v_{i}(W_{i,j,\ell} - W_{i+1,j,\ell}),\end{align}
and hence, we can conclude that \begin{align}
    P_{i+1,j,\ell} - P_{i,j,\ell} \geq v_{i}(W_{i,j,\ell} - W_{i+1,j,\ell}),
\end{align} which satisfies the vertical IC constraints using Lemma \eqref{localic lemma}. For proving Horizontal IC, we know from   \eqref{ahmadic2} that
  $ P_{i,j+1,\ell} - P_{i,j,\ell} = v_{i}(W_{i,j,\ell} - W_{i,j+1,\ell})$, which satisfies the condition stated in Lemma \eqref{localic lemma} for Horizontal IC. For proving \eqref{IC4}, we need to show that $P_{i,j,\ell} + v_{i} W_{i,j,\ell} \leq  P_{i,j,m} + v_{i} W_{i,j,m}$ if $m \in \mathcal B_{\ell}$ that we can get with considering constraint \eqref{new.profit.cons} and equations \eqref{ahmadic1}-\eqref{ahmadic3}.
We prove IR by induction for customers with traveling preference $\mathcal{G}_{\ell}$. We know that IR requires that $P_{i,j,\ell} \leq R_i - v_i W_{i,j,\ell}$. Starting with $i=1$ we have $P_{1,j,\ell} = R_1 - v_1 W_{1,j,\ell}$. Now,   assume that  IR holds for type $(i,j,\ell)$. For type $(i+1,j,\ell)$, we can write $P_{i+1,j,\ell} = \big(P_{i,j,\ell} + v_{i+1} W_{i,j,\ell} - v_{i+1}W_{i+1,j,\ell}\big) \leq \big( R_i - v_i W_{i,j,\ell} + v_{i+1}W_{i,j,\ell} - v_{i+1}W_{i+1,j,\ell}\big)$.  Also, we know that $W_{i+1,j,\ell} \leq W_{i,j,\ell} \leq \frac{R_i}{v_i} \leq \big(\frac{R_{i+1} - R_{i}}{v_{i+1} - v_i}\big)$, which leads to 
$P_{i+1,j,\ell} \leq \bigg( R_i + (v_{i+1} - v_{i})\frac{R_{i+1} - R_i}{v_{i+1} - v_i} - v_{i+1}W_{i+1,j,\ell}\bigg)$.
Accordingly,
$P_{i+1,j,\ell} \leq R_i + R_{i+1} - R_i - v_{i+1}W_{i+1,j,\ell} = R_{i+1} - v_{i+1} W_{i+1,j,\ell}$, 
which concludes that: $P_{i+1,j,\ell} \leq R_{i+1} - v_{i+1} W_{i+1,j,\ell} $. This proves IR.

\end{document}